
\documentclass[12pt,notitlepage]{article}%
\usepackage{amsthm}
\usepackage{setspace}
\usepackage{eurosym}
\usepackage{subfigure}
\usepackage{xcolor}
\usepackage{amssymb}
\usepackage{graphicx}
\usepackage[colorlinks,linkcolor=links,citecolor=cites,urlcolor=MyDarkBlue]%
{hyperref}
\usepackage{amsfonts}
\usepackage{amsmath}
\usepackage{amstext}
\usepackage{appendix}
\usepackage{appendix}
\usepackage{mathpazo}
\usepackage{tikz}
\usepackage{verbatim}

\usetikzlibrary{trees}
\usetikzlibrary{matrix}
\usetikzlibrary{decorations.pathreplacing}
\usetikzlibrary{calc}
\usepackage{natbib}%
\setcounter{MaxMatrixCols}{30}
\providecommand{\U}[1]{\protect\rule{.1in}{.1in}}
\newtheorem{theorem}{Theorem}

\newtheorem{definition}{Definition}
\newtheorem{example}{Example}

\newtheorem{lemma}{Lemma}

\numberwithin{equation}{section}
\setlength{\topmargin}{0in}
\setlength{\textheight}{8.5in}
\setlength{\oddsidemargin}{0.1in}
\setlength{\evensidemargin}{0.1in}
\setlength{\textwidth}{6.5in}
\setlength{\headheight}{0in}

\definecolor{MyDarkBlue}{rgb}{0,0.08,0.45}
\definecolor{cites}{HTML}{324b13}
\definecolor{links}{HTML}{1a663b}
\definecolor{MyLightMagenta}{cmyk}{0.1,0.8,0,0.1}
\hypersetup{
colorlinks,citecolor=blue,filecolor=black,linkcolor=blue,urlcolor=blue
}
\linespread{1.5}
\parindent=0.8cm
\setlength{\parskip}{0.5ex}
\usepackage[top=1.25in,bottom=1.25in,left=1.25in,right=1.25in]{geometry}
\usepackage{setspace}
\usepackage{footmisc}
\usepackage{lipsum}
\setlength{\footnotemargin}{6mm}
\begin{document}
\title{Stable matching: an integer programming approach\thanks{I thank the coeditor, Federico Echenique, and two anonymous referees for very helpful comments and suggestions that greatly improved this paper. I thank Zhenhua Jiao, Yan Ju, Wei Ma, and Jun Zhang for many helpful discussions. I am grateful to Ning Sun, Qianfeng Tang, and Guoqiang Tian for their continuous support and encouragement. All errors are mine.}}
\author{Chao Huang\thanks{Institute for Social and Economic Research, Nanjing Audit University. Email: huangchao916@163.com}}
\date{}
\maketitle

\begin{abstract}
This paper develops an integer programming approach to two-sided many-to-one matching by investigating stable integral matchings of a fictitious market where each worker is divisible. We show that stable matchings exist in a discrete matching market when firms' preference profile satisfies a total unimodularity condition that
is compatible with various forms of complementarities. We provide a class of firms' preference profiles that satisfy this condition.
\end{abstract}

\textit{Keywords}: two-sided matching; stability; integer programming; many-to-one matching; complementarity; total unimodularity; demand type

\textit{JEL classification}: C61, C78, D47, D63

\section{Introduction}
Studies of two-sided matching originated with the seminal work of \cite{GS62} on the markets of marriage and college admission. In the past decades, the theory of two-sided matching has provided practical solutions to real-life matching problems such as hospital-doctor matching, college admission, and school choice.\footnote{See \cite{RP99}, \cite{R02}, \cite{AS03}, \cite{APR09}, \cite{EY15}, and \cite{KK15}, among others.} \cite{KC82} and \cite{R84} have recognized different versions of substitutability conditions that are sufficient for the existence of a stable matching in different settings. The substitutability condition for a discrete matching market\footnote{We use the term ``discrete matching market" to distinguish it from the matching market with continuous monetary transfers and quasi-linear utilities; see Section \ref{lite}.} requires that any worker chosen by a firm from a set of available workers should still be chosen when the available set shrinks.\footnote{If a worker is chosen by a firm from a set of available workers but not chosen when the available set shrinks, we would deem that some of the worker's complements become unavailable as the set shrinks.} Most real-life matching practices rely on the substitutability condition. Complementarities in firms' preferences have been regarded as the primary source of difficulties for market design. For instance, consider the following market borrowed from \cite{CKK19} (henceforth, CKK)\footnote{See Section 2 of CKK. Literature on the quasi-linear market also uses examples of a similar structure to illustrate the nonexistence of an equilibrium caused by complementarities; see e.g., \cite{AWW13} and \cite{BK19}.} where there are two firms $f_1,f_2$ and two workers $w_1,w_2$. The agents have the following preferences.

\begin{equation}\label{exam_in1}
\begin{aligned}
&f_1: \{w_1,w_2\}\succ\emptyset \qquad\qquad\qquad\qquad &w_1: &\quad f_1\succ f_2\\
&f_2: \{w_1\}\succ\{w_2\}\succ\emptyset \qquad\qquad\qquad\qquad &w_2: &\quad f_2\succ f_1
\end{aligned}
\end{equation}
Firm $f_1$'s preference violates the substitutability condition since $w_1$ and $w_2$ are complements for $f_1$.\footnote{This is because $w_1$ would be hired by $f_1$ from $\{w_1,w_2\}$ but not hired by $f_1$ from $\{w_1\}$. Similarly, $w_2$ would be hired by $f_1$ from $\{w_1,w_2\}$ but not hired by $f_1$ from $\{w_2\}$.} No stable matching exists in this market.\footnote{In this market, $f_1$ should hire both workers or neither in any stable matching. In the former case, $f_2$ would form a blocking coalition with $w_2$, who prefers $f_2$ to $f_1$. In the latter case, $f_2$ would be matched with $w_1$, leaving $w_2$ unmatched. Then, $f_1$ would form a blocking coalition with both $w_1$ and $w_2$.}

The lack of stable matching has been attributed to complementarities in firms' preferences. However, consider the following market where $f_2$'s preference is changed.

\begin{equation}\label{exam_in2}
\begin{aligned}
&f_1: \{w_1,w_2\}\succ\emptyset \quad\qquad &w_1: &\quad f_1\succ f_2\\
&f_2: \{w_1,w_2\}\succ\{w_1\}\succ\{w_2\}\succ\emptyset \quad\qquad &w_2: &\quad f_2\succ f_1
\end{aligned}
\end{equation}
There exists a stable matching in this market where $f_2$ hires both $w_1$ and $w_2$. The complementarity in $f_1$'s preference does not distort stability in this market. This paper shows that the firms' preference profile in the latter example satisfies a total unimodularity condition that guarantees a stable matching for all possible preferences of workers, whereas this condition is violated in the former example. We find that the notion of demand type proposed by \cite{BK19} (henceforth, BK) is useful for analyzing firms' preferences in a discrete matching market.\footnote{BK studied the exchange economy where agents have quasi-linear utilities over indivisible goods, which we call the quasi-linear market. See Section \ref{lite}.} When the available set of workers expands, $f_1$ would hire both workers when they are available, thus $f_1$ has a demand type of $\{(1,1)\}$. When the available set of workers for $f_2$ expands from $\{w_2\}$ to $\{w_1,w_2\}$, $f_2$ would drop $w_2$ and hire $w_1$ in the former example, thus the demand type of $f_2$ contains $(1,-1)$, whereas the demand type of $f_2$ does not contain $(1,-1)$ in the latter example. The demand type of $f_2$ is $\{(1,0),(0,1)(1,-1)\}$ in the former example, and $\{(1,1),(1,0),(0,1)\}$ in the latter example.\footnote{In both examples, when the available set of workers for $f_2$ expands from $\emptyset$ to $\{w_1\}$ or $\{w_2\}$, $f_2$ would hire $w_1$ or $w_2$. Thus $(1,0)$ and $(0,1)$ are in $f_2$'s demand type in both examples. In the latter example, the demand type of $f_2$ also contains $(1,1)$, since when the available set of workers expands from $\emptyset$ to $\{w_1,w_2\}$, $f_2$ would hire both workers.} A matrix is totally unimodular if every square submatrix has determinant 0 or $\pm1$. A set of vectors is totally unimodular if the matrix that has these vectors as columns is totally unimodular. The firms' demand type is $\{(1,1),(1,0),(0,1),(1,-1)\}$ in the former example, which is not totally unimodular (because the determinant of $(1,1)$ and $(1,-1)$ is $-2$). The firms' demand type is $\{(1,1),(1,0),(0,1)\}$ in the latter example, which is totally unimodular (because any matrix formed by two of these vectors has determinant 0 or $\pm1$). In this paper, we show that this total unimodularity of firms' preferences guarantees the existence of a stable matching for all possible preferences of workers.

Previous studies on discrete matching markets are mostly based on the Gale-Shapley mechanism or Taski's fixed-point theorem.\footnote{The Gale-Shapley mechanism with its variants is a standard tool for two-sided matching. See Section \ref{lite} for literature on the fixed-point method.} This paper develops a new integer programming approach. We prove our results by studying stable integral matchings of a continuum market induced from the original market. We assume that each worker is divisible and construct a particular continuous preference on worker shares for each firm. This continuum market is an instance of CKK's continuum market, and CKK's existence theorem guarantees the existence of a stable matching in our continuum market. Each stable integral matching of this market corresponds to a stable matching of the original market. Finally, we can apply tools from integer programming to prove the existence of a stable integral matching in the continuum market when firms' demand type is totally unimodular. In a discrete matching market, total unimodularity is independent of substitutability and compatible with various complementarities.\footnote{In a quasi-linear market, substitutability implies total unimodularity (see Section 3.2 of BK). However, the two conditions are independent of each other in a discrete matching market, see Section \ref{Sec_theorem}.} Therefore, our result indicates that stable matching is compatible with various forms of complementarities in firms' preferences. This observation runs contrary to the common belief that complementarities distort stability.

Our existence theorem applies to a problem of matching firms with specialists, which has a practical economic meaning. We depict the structure of firms' acceptable sets of workers by a technology tree, where each vertex of the tree represents a technology that requires a set of workers to implement. Each edge of the tree is an upgrade from one technology to another that requires more workers. We say that a worker is a specialist if she engages in only one upgrade. We show that firms' demand type is totally unimodular when firms have preferences over the technologies in a technology tree where each worker is a specialist. See Section \ref{Sec_App}.

\subsection{Related literature}\label{lite}

There are two parallel lines in the literature of matching theory. The discrete matching market we study assumes that there are no monetary transfers (school choice and college admission) or that workers' wages are exogenously given (firm-worker matching and hospital-doctor matching).
The other line assumes that there are continuous monetary transfers between firms and workers where firms have quasi-linear utilities and workers' wages are determined endogenously. This market can be viewed as an exchange economy where agents have quasi-linear utilities over indivisible goods.\footnote{See Section 6.4 of BK.} We call both the matching market and the exchange econmy of this line the quasi-linear market, where the solution concept may be stable matching or equilibrium.

\cite{KC82} found that stable matchings exist in a quasi-linear market when each firm has a gross-substitute valuation over workers. \cite{R84,R85} found that stable matchings exist in a discrete many-to-many market when all agents have substitutable preferences. In quasi-linear markets and discrete matching markets, \cite{SY06,SY09} and \cite{O08} respectively generalized the restriction of substitutability to allow complementarities between two group of agents where substitutability is satisfied within each group.\footnote{\cite{O08} studied the problem of supply chain networks, which subsumes the two-sided matching as a special case; see also \cite{HKNOW13}.} \cite{DKM01} showed that a unimodularity condition is  sufficient for the existence of an equilibrium in a quasi-linear market.\footnote{The proof was provided in \cite{DK04} as a special case of Theorem 3 therein.} BK proved this result independently and enhanced it into the unimodularity theorem, which states that an equilibrium exists for all profiles of concave valuations of a demand type if and only if the demand type is unimodular.

This paper is closely related to CKK, with BK's notion of demand type playing a critical role. In a discrete matching market, CKK found that stable matchings always exist with a continuum of workers when firms have continuous preferences, which can exhibit various complementarities. Their approach enables market designers to pursue an approximately stable matching, which is guaranteed to exist in a real-life market when the market is large enough. CKK proved their existence theorem for the continuum market using the Brouwer (or Kakutani) fixed-point theorem.\footnote{CKK applied the Kakutani-Fan-Glicksberg theorem to prove their existence theorem in a general setting, allowing for indifferences in firms' preferences and infinite worker types. Our continuum market is the basic case with strict preferences and finite worker types in which stable matchings correspond to Brouwer fixed points.} Thus, the existence of a stable matching proved in this paper is essentially the existence of an integral Brouwer fixed point, which is quite different from previous methods in matching theory.

BK's demand type considers how demand changes as prices change, while our demand type considers how demand changes as available workers change. BK proved their results using tools from tropical geometry. \cite{TY19} showed that the unimodularity theorem for the quasi-linear market could also be proved via integer programming. Although we adopt the notion of demand type from BK and use a similar tool as \cite{TY19}, our method is quite different from theirs. The unimodularity theorem for the quasi-linear market is built on the following result: Equilibrium always exists in a quasi-linear market if and only if the aggregate valuation is concave.\footnote{See e.g., Lemma 2 of \cite{TY19}.} In a discrete matching market, there is no counterpart to the notion of the aggregate valuation. Therefore, the methods for quasi-linear markets do not apply to discrete matching markets. Our method is to study the existence of a stable integral matching in a market with ``divisible" workers, where stable matchings always exist according to CKK.

Our method also relates to the linear programming (henceforth, LP) method for selecting stable matchings of specific properties. \cite{V89} showed the polytope defined by the stable matchings in the marriage market and solved the optimal marriage problem as a linear program. This method was then developed by \cite{R92}, \cite{RRV93}, and \cite{BB00}. \cite{TS98} and \cite{STQ06} used this method to find a median stable matching, which is a compromise between agents of the two sides among stable matchings. We discuss the relation between the LP method and our method in Section \ref{Sec_LP}. \cite{BMM14} and \cite{ABM16} used integer programming methods to study complexities of matching with couples and special college admission problems, respectively.

Stable matchings in a discrete matching market have been characterized as Taski's fixed points; see \cite{A00}, \cite{F03}, \cite{EO04,EO06}, and \cite{HM05}, among others. A more straightforward method for our purpose would be studying the system defined by the fixed-point characterization under the total unimodularity condition. The fixed-point method is appealing for our problem because the fixed point characterization provides a necessary and sufficient condition that can be used for studying further generalizations or variations of the total unimodularity condition.\footnote{For instance, we can not tell whether a stable matching exists under some generalized unimodularity condition if this condition does not guarantee an integral solution to the constructed linear equations in Section \ref{Sec_method}. By contrast, we can confirm that stable matching does not exist when there is no fixed point for the related operator in the fixed point method.} Although it is so far unknown how to obtain our result using the fixed-point method, proving our result using the fixed-point method would be useful for future research.

Complementarities were also considered in the problem of matching with couples (e.g., \citealp{KK05} and \citealp{NV18}) and matching with peer effects (see \citealp{EY07} and \citealp{P12}). CKK is a part of the literature that treated the problem of complementarity by pursuing approximate stable outcomes; see also \cite{AWW13}, \cite{KPR13}, and \cite{AH18}, among others. The setting of discrete matching has been generalized to allow discrete contract terms between firms and workers (see \citealp{R84} and \citealp{HM05}), where contracts may specify wages, insurances, retirement plans, etc. Under the generalized setting, \cite{HK10} proposed weakened substitutability conditions that are not implied by the Kelso-Crawford gross substitutes condition (see \citealp{E12}); see also \cite{HK19}. We can also extend our model to the framework of matching with contracts by studying stable integral matchings of the market in Section S.9 of CKK.\footnote{CKK extended their results to the framework of matching with contract terms in their Section S.9.} The author's recent works, \cite{H21a} and \cite{H21b}, provide sufficient conditions for stable matching where firms' preferences may violate both the substitutability condition and the total unimodularity condition.

The remainder of this paper is organized as follows. Section Section \ref{Sec_M} presents the model of a discrete matching market and the existence theorem. Section \ref{Sec_method} elaborates on our method for the proof of the existence theorem. Section \ref{Sec_App} provides an application of matching firms with specialists. Proofs are relegated to the Appendix.

\section{Model\label{Sec_M}}

\subsection{Preliminaries}

There is a set $F=\{f_1,\ldots,f_m\}$ of $m$ firms, and a set $W=\{w_1,\ldots,w_n\}$ of $n$ workers. Let ${\o}$ be the null firm, representing not being matched with any firm. Each worker $w\in W$ has a strict, transitive, and complete preference $\succ_w$ over $\widetilde{F}:=F\cup\{{\o}\}$. For any $f, f'\in \widetilde{F}$, we write $f\succ_w f'$ when $w$ prefers $f$ to $f'$ according to $\succ_w$. We write $f\succeq_w f'$ if either $f\succ_w f'$ or $f=f'$. Let $\succ_W$ denote the preference profile of all workers. Each firm $f\in F$ has a strict, transitive and complete preference $\succ_f$ over $2^W$. For any $S,S'\subseteq W$, we write $S\succ_f S'$ when $f$ prefers $S$ to $S'$ according to $\succ_f$. We write $S\succeq_f S'$ if either $S\succ_f S'$ or $S=S'$. Let $\succ_F$ be the preference profile of all firms. A matching market can be summarized as a tuple $\Gamma=(W,F,\succ_W,\succ_F)$.

Let $Ch_f$ be the choice function of $f$ such that for any $S\subseteq W$, $Ch_f(S)\subseteq S$ and $Ch_f(S)\succeq_f S'$ for any $S'\subseteq S$. By convention, let $Ch_{{\o}}(S)=S$ for all $S\subseteq W$. For any $f\in F$, any $w\in W$, and any $S\subseteq W$, we say that $f$ is acceptable to $w$ if $f\succ_w {\o}$; we say that $S$ is acceptable to $f$ if $S\succ_f \emptyset$.\footnote{Note that ${\o}$ is the null firm whereas $\emptyset$ is the empty set of workers.}

\begin{definition}
\normalfont
A \textbf{matching} $\mu$ is a function from the set $\widetilde{F}\cup W$ into $\widetilde{F}\cup 2^W$ such that for all $f\in \widetilde{F}$ and $w\in W$,
\begin{description}
\item[(\romannumeral1)] $\mu(w)\in \widetilde{F}$;

\item[(\romannumeral2)] $\mu(f)\in 2^W$;

\item[(\romannumeral3)] $\mu(w)=f$ if and only if $w\in\mu(f)$.
\end{description}
\end{definition}

We say that a matching $\mu$ is \textbf{individually rational} if $\mu(w)\succeq_w {\o}$ for all $w\in W$ and $\mu(f)=Ch_f(\mu(f))$ for all $f\in F$. We say that a firm $f$ and a subset of workers $S\subseteq W$ form a \textbf{blocking coalition} that blocks $\mu$ if $f\succeq_w\mu(w)$ for all $w\in S$, and $S\succ_f\mu(f)$. In words, individual rationality requires that each matched worker prefer her current employer to being unmatched and that no firm wish to unilaterally drop any of its employees. When $f$ and $S$ block $\mu$, $S$ may contain workers that are matched with $f$ in $\mu$. Thus, we require each worker $w\in S$ to weakly prefer $f$ to $\mu(w)$. $f$ should strictly prefer $S$ to $\mu(f)$ since we require $S\neq\mu(f)$.

\begin{definition}\label{stability}
\normalfont
A matching $\mu$ is \textbf{stable} if it is individually rational and there is no blocking coalition that blocks $\mu$.\footnote{The no blocking coalition condition implies the individual rationalities of firms: $\mu(f)=Ch_f(\mu(f))$ for all $f\in F$. This is also the case for Definition \ref{stable} (see footnote 28 of CKK). The exposition of these two definitions follows the convention in the literature.}
\end{definition}

Stable matching is guaranteed to exist in a matching market when the preference of each firm satisfies the following substitutability condition; see Chapter 6 of \cite{RS90}.\footnote{The stability of Definition \ref{stability} is called the core defined by weak domination in \cite{RS90}.}

\begin{definition}
\normalfont
Firm $f$ has a \textbf{substitutable} preference if for any $S\subseteq W$ and any$\{w,w'\}\subseteq S$, $w\in Ch_f(S)$ implies $w\in Ch_f(S\setminus \{w'\})$.
\end{definition}

If, on the contrary, there exist $S\subseteq W$ and $\{w,w'\}\subseteq S$ such that $w\in Ch_f(S)$ but $w\notin Ch_f(S\setminus \{w'\})$, then we would consider $w$ and $w'$ to be complements in $f$'s preference because $w$ becomes undesired when $w'$ is not available.

\subsection{Demand type}

Now we introduce BK's concept of demand type to discrete matching markets. For any subset $S\subseteq W$ of workers, we let $ind(S)\in\{0,1\}^W$ denote the indicator vector of $S$. For any $S, S'\subseteq W$ such that $ind(S)=\mathbf{x}$ and $ind(S')=\mathbf{x}'$, we abuse the notation to write $\mathbf{x}\succeq_f \mathbf{x}'$ when $S\succeq_f S'$, $\mathbf{x}\succ_f \mathbf{x}'$ when $S\succ_f S'$, and $Ch_f(\mathbf{x})=\mathbf{x}'$ when $Ch_f(S)=S'$.

\begin{definition}\label{df_dt}
\normalfont
For each $f\in F$, let $\mathcal{D}_f=\{\mathbf{d}\in \{0,1\}^W\mid  \mathbf{d}\neq \mathbf{0}$ and $\mathbf{d}=ind(Ch_f(S))-ind(Ch_f(S'))$ for some $S,S'$ such that $S'\subset S\subseteq W\}$ be $f$'s \textbf{demand type}.
The demand type for the firms' preference profile is $\mathcal{D}=\cup_{f\in F}\mathcal{D}_f$.
\end{definition}

A matrix is \textbf{totally unimodular} if every square submatrix has determinant 0 or $\pm1$. In particular, each entry in a totally unimodular matrix is 0 or $\pm1$. A set of vectors is called totally unimodular if the matrix that has these vectors as columns is totally unimodular. Total unimodularity of firms' demand type can be tested in polynomial time; see, for example, \cite{WT13}. In a practical problem, suppose a firm $f$ reports a preference that contains $N$ acceptable sets. If $S$ and $S'$ are two sets such that $Ch_f(S')=S'$ and $S\succ_f S'$, we should check whether $S=Ch_f(S\cup S')$ holds to determine whether $ind(S)-ind(S')$ belongs to $f$'s demand type. Hence, it takes $\mathcal{O}(N^3)$ time to obtain $f$'s demand type, which contains at most $N(N+1)/2$ vectors.

\subsection{Existence theorem}\label{Sec_theorem}

We find that stable matchings exist in a discrete matching market for all possible preferences of workers, provided the firms' demand type is totally unimodular.\footnote{A previous version of this paper (\citealp{H21c}) defined the null firm's demand type to be the set of unit vectors and defined the firms' demand type to include the null firm's demand type. The unimodularity of firms' demand type defined in the previous version is equivalent to the current total unimodularity condition according to the following fact: For any matrix $H$, $[H,I]$ is unimodular if and only if $H$ is totally unimodular.}

\begin{theorem}\label{thm_main}
\normalfont
There exists a stable matching if firms' demand type $\mathcal{D}$ is totally unimodular.
\end{theorem}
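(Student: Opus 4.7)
The plan is to prove Theorem \ref{thm_main} by embedding the discrete market $\Gamma$ into a fictitious continuum market with divisible workers along the lines of CKK, obtaining a (possibly fractional) stable matching via CKK's existence theorem, and then using the total unimodularity of $\mathcal{D}$ to promote this to an integral stable matching, which will correspond to a stable matching of $\Gamma$.

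First I would construct the continuum market. For each firm $f$ I build a continuous preference $\widetilde{\succ}_f$ on $[0,1]^W$ whose restriction to $\{0,1\}^W$ agrees with $\succ_f$ on acceptable sets, and which has the property that for every integral availability vector $\mathbf{a}\in\{0,1\}^W$ the optimal choice of $f$ from $\{\mathbf{s}\le\mathbf{a}\}$ under $\widetilde{\succ}_f$ coincides with $ind(Ch_f(S))$, where $ind(S)=\mathbf{a}$. With this construction, every integral stable matching of the continuum market automatically satisfies individual rationality and admits no blocking coalition in the original discrete sense. Applying CKK's existence theorem to this continuum market produces a (possibly fractional) stable matching $x^\ast\in[0,1]^{F\times W}$.

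Next I would use integer programming to extract an integral stable matching. By \autoref{df_dt}, the vectors in $f$'s demand type $\mathcal{D}_f$ are exactly the differences $ind(Ch_f(S))-ind(Ch_f(S'))$ along chains $S'\subset S$, so they encode the first-order data of how the firm's optimal choice moves as available workers change. The plan is to express ``$x$ is a stable matching of the continuum market'' as the feasibility of a linear system whose coefficient matrix has columns drawn from $\mathcal{D}$, augmented with the capacity constraints $\sum_{f} x_{fw}\le 1$ and $x_{fw}\ge 0$ (which add only unit-vector columns and therefore preserve total unimodularity). Since a totally unimodular constraint matrix with integer right-hand side defines a polyhedron whose vertices are integer, and since the fractional matching $x^\ast$ witnesses nonemptiness of this polyhedron, it must contain an integer point. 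That integer point is, by construction, an integral stable matching of the continuum market, hence a stable matching of $\Gamma$.

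The main obstacle is the middle step: crafting the continuous extension $\widetilde{\succ}_f$ so that the continuum stability condition really can be cast as feasibility of a linear system whose coefficient columns live in $\mathcal{D}$. In particular, one must ensure that workers' side conditions---each worker's mass flows only to firms she weakly prefers to her current match---contribute only unit-vector columns, while each firm's local optimality conditions contribute only columns in $\mathcal{D}_f$. This linearization of the market-clearing/fixed-point condition is the technical core of the argument and is presumably carried out in \autoref{Sec_method}.
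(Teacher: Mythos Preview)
Your high-level plan---embed $\Gamma$ into a continuum market, invoke CKK to get a fractional stable matching, then use total unimodularity of $\mathcal{D}$ to round to an integral one---matches the paper's architecture. But the middle step you describe is not what the paper does, and as stated it would not go through.

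You propose to write ``$x$ is a stable matching of the continuum market'' as feasibility of a linear system whose columns lie in $\mathcal{D}$ (plus unit vectors). This is in the spirit of the LP literature on stable-matching polytopes, but that literature depends on responsive (hence substitutable) preferences; with complementarities the set of stable matchings is generally \emph{not} a polyhedron, so there is no hope of linearizing stability itself. The paper explicitly distinguishes its method from the LP approach for exactly this reason.

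What the paper actually does is the following. It fixes one particular fractional stable matching $M$ and looks at its canonical decomposition $M_f=\sum_k t_k\mathbf{u}^k$ produced by the explicit ``consumption'' choice function $\widehat{Ch}_f$. It then proves (Lemma~\ref{lma_trans}) that three elementary \emph{stability-preserving transformations} on pseudo-matchings let one replace each firm's fractional mixture by a single $\mathbf{u}^k$ (or $\mathbf{0}$) and each unmatched mass by $0$ or $1$, without ever creating a blocking coalition. The linear system $B\mathbf{z}=\mathbf{1}$, $\mathbf{z}\ge 0$ is \emph{not} a description of stability at all: its first $m$ rows merely force exactly one piece to be selected for each firm, and its last $n$ rows force total mass~$1$ for each worker type. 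Stability of the output is inherited from $M$ through the transformation lemma, not encoded in $B$.

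Correspondingly, the columns of $B$ are not demand-type vectors: for each firm they are the indicator vectors $\mathbf{u}^{k}$ of the acceptable sets that appear with positive weight in $M_f$ (stacked with the appropriate unit block), together with unit vectors for the null firm. The link to $\mathcal{D}$ arises only when proving that $B$ is unimodular: within each firm's block one subtracts a fixed column from the others, and those \emph{differences} $\mathbf{u}^{k_i}-\mathbf{u}^{k_j}$ are elements of $\mathcal{D}_f$, so total unimodularity of $\mathcal{D}$ (hence unimodularity of $[\mathcal{D},I]$) yields unimodularity of $B$. Your picture---firm optimality contributing columns in $\mathcal{D}_f$ and worker constraints contributing unit vectors---has the right flavor but the wrong object; the demand-type vectors enter as column differences in the unimodularity argument, not as the columns of the feasibility system.
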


We explain our method for the proof of this theorem in Section \ref{Sec_method}. Totally unimodular demand types are prevalent and compatible with various complementarities. We provide a class of firms' preference profiles in Section \ref{Sec_App} that exhibits both total unimodularity and complementarities.

\cite{DKM01} and BK showed that an equilibrium exits in a quasi-linear market if agents have a unimodular demand type. Unimodularity is weaker than total unimodularity. A set of vectors in $\mathbb{Z}^n$ is \textbf{unimodular} if every linearly independent subset can be extended to a basis for $\mathbb{R}^n$, of integer vectors, with determinant $\pm1$.\footnote{$k<n$ linearly independent vectors $\mathbf{a}^1,\ldots,\mathbf{a}^k$ can be extended to a basis for $\mathbb{R}^n$ if there exist vectors $\mathbf{a}^{k+1},\ldots,\mathbf{a}^n$ such that $\mathbf{a}^1,\ldots,\mathbf{a}^n$ is a basis for $\mathbb{R}^n$. By the ``determinant" of $n$ vectors in $\mathbb{Z}^n$ we mean the determinant of the $n\times n$ matrix that has them as its columns.} A matrix is unimodular if the set of its columns is unimodular. Notice that a matrix of full row rank is unimodular if any square submatrix formed by its columns has determinant $0$ or $\pm1$. The following example shows that a stable matching does not necessarily exist in a discrete matching market when $\mathcal{D}$ is unimodular.\footnote{Example \ref{exam_uni} does not conflict with the existence theorem stated in the previous version (\citealp{H21c}) because the definition of firms' demand type in the previous version differs from Definition \ref{df_dt}.}

\begin{example}\label{exam_uni}
\normalfont
There are three firms $f_1,f_2,f_3$, and three workers $w_1,w_2,w_3$. The agents have the following preferences.
\begin{equation}\label{exam_unimodular}
\begin{aligned}
&f_1: \{w_1,w_2,w_3\}\succ\emptyset \qquad\qquad\qquad\qquad &w_1: &\quad f_1\succ f_2\succ {\o}\\
&f_2: \{w_1\}\succ\{w_2\}\succ\emptyset \qquad\qquad\qquad\qquad &w_2: &\quad f_2\succ f_1\succ f_3\succ {\o}\\
&f_3: \{w_2,w_3\}\succ\emptyset \qquad\qquad\qquad\qquad &w_3: &\quad f_1\succ f_3\succ {\o}
\end{aligned}
\end{equation}
We have $\mathcal{D}_{f_1}=\{(1,1,1)\}$, $\mathcal{D}_{f_2}=\{(1,0,0),(0,1,0),(1,-1,0)\}$, and $\mathcal{D}_{f_3}=\{(0,1,1)\}$. The firms' demand type is $\mathcal{D}=\{(1,1,1),(1,0,0),(0,1,0),(1,-1,0),(0,1,1)\}$, which is unimodular but not totally unimodular.\footnote{$\mathcal{D}$ is unimodular since the matrix formed by vectors from $\mathcal{D}$ has full row rank and any submatrix formed by three vectors from $\mathcal{D}$ has determinant $0$ or $\pm1$. $\mathcal{D}$ is not totally unimodular because of the submatrix formed by $(1,1)$ and $(1,-1)$.} This market does not admit a stable matching: Suppose $f_3$ is matched with $\{w_2,w_3\}$, then $w_1$ is either employed by $f_2$ or unmatched. In both cases, $f_1$ and $\{w_1,w_2,w_3\}$ form a blocking coalition. Suppose $f_3$ is matched with $\emptyset$, we do not have a stable matching for a similar reason as (\ref{exam_in1}).
\end{example}

In a quasi-linear market, both conditions of gross substitutes (\citealp{KC82}) and gross substitutes and complements (\citealp{SY06}) imply total unimodularity.\footnote{See Section 3.2 of BK.} However, total unimodularity is independent of substitutability in a discrete matching market. In a quasi-linear market, the demand type for a gross-substitute valuation only involves vectors that each has at most one coordinate of $+1$ and at most one coordinate of $-1$.\footnote{See Definition 3.5 and Proposition 3.6 of BK.} But this is not the case for the demand type of a firm's substitutable preference in a discrete matching market. Next is an example of a substitutable preference profile with demand type that fails total unimodularity.

\begin{example}\label{last}
\normalfont
There are two firms $f_1,f_2$, and two workers $w_1,w_2$. The firms have the following preferences.
\begin{align*}
&f_1: \{w_1,w_2\}\succ\{w_1\}\succ\{w_2\}\succ\emptyset \qquad \qquad \qquad f_2: \{w_1\}\succ\{w_2\}\succ\emptyset
\end{align*}
Both firms have substitutable preferences.
However, we have $\mathcal{D}_{f_1}=\{(1,1),(1,0),(0,1)\}$ and $\mathcal{D}_{f_2}=\{(1,0),(0,1),(1,-1)\}$. The firms' demand type is $\mathcal{D}=\{(1,1),(1,0),(0,1),(1,-1)\}$, which is not totally unimodular.
\end{example}

The necessity part of BK's unimodularity theorem says that given a demand type that is not unimodular, there must be some profile of concave valuations of this demand type for which an equilibrium does not exist.\footnote{See Corollary 4.4 of BK.} It is unknown whether the following counterpart holds in our context: In a discrete matching market $\Gamma$ with firms' demand type $\mathcal{D}$ not being totally unimodular (such as the market in Example \ref{last}), there must be some market $\Gamma'$ with the same demand type for firms, for which stable matching does not exist (such as market (\ref{exam_in1}), the firms' demand type in (\ref{exam_in1}) is the same as that in Example \ref{last}).

\section{Method}\label{Sec_method}

We elaborate on our method for the proof of Theorem \ref{thm_main} in this section. We present the market with ``divisible" workers in Section \ref{Sec_cont}, the stability-preserving turnovers in this market in Section \ref{Sec_preserve}, and an illustrative example in Section \ref{Sec_illu}. We discuss the relation to the LP method for stable matching in Section \ref{Sec_LP}.

\subsection{A continuum market}\label{Sec_cont}

We construct an instance of the continuum market in CKK from a matching market $\Gamma$ by assuming that each worker in $W$ is divisible. In this setting, $W$ is called the set of worker types. There is a divisible mass of quantity 1 of each worker type $w\in W$. A vector $\mathbf{x}\in[0,1]^W$ is called a \textbf{subpopulation} where $x(w)$ is the quantity of the type-$w$ workers for each $w\in W$. We say that a subset of worker types $S\subseteq W$ (or $\mathbf{z}\in\{0,1\}^W$) is available at subpopulation $\mathbf{x}\in[0,1]^W$ if $w\in S$ implies $x(w)>0$ (resp. $z(w)=1$ implies $x(w)>0$). Each worker type $w\in W$ has a strict preference $\succ_w$ over $\widetilde{F}$, where $\succ_w$ is the preference of worker $w$ in the original market $\Gamma$. Each firm $f\in \widetilde{F}$ has a choice function $\widehat{Ch}_f: [0,1]^W\rightarrow [0,1]^W$, where $\widehat{Ch}_f(\mathbf{x})\leq \mathbf{x}$. By convention, let $\widehat{Ch}_{\o}(\mathbf{x})=\mathbf{x}$ for all $\mathbf{x}\in [0,1]^W$. We now construct a choice function $\widehat{Ch}_f$ for each firm $f\in F$ based on its preference $\succ_f$ in $\Gamma$.

Let $\mathbf{u}^1\succ_f\mathbf{u}^2\succ_f\cdots\succ_f\mathbf{u}^L\succ_f\mathbf{0}$ be the preference order of firm $f$ over its acceptable sets of workers in $\Gamma$, where $\mathbf{u}^j\in\{0,1\}^W$ for all $j\in\{1,\ldots,L\}$ and $\mathbf{0}$ is the empty set of workers. Before proceeding to the formal definition of $\widehat{Ch}_f$, we want to give an intuitive description of this choice function, which resembles that of the Probabilistic Serial assignment of \cite{BM01}. Consider the workers of each worker type of $\mathbf{x}$ as a ``divisible commodity". The firm first consumes the workers of each type of $\mathbf{u}^j$ from $\mathbf{x}$ simultaneously at speed 1, where $j$ is the smallest index such that $\mathbf{u}^j$ is available at $\mathbf{x}$. When the workers of one worker type of $\mathbf{u}^j$ is exhausted, the firm switches to consume the workers of $\mathbf{u}^{j'}$ simultaneously at speed 1, where $j'$ is the smallest index such that $\mathbf{u}^{j'}$ is available at the remaining subpopulation. This procedure goes on until the time reaches 1, or until there is no acceptable set of worker types available at the remaining subpopulation.

\begin{example}
\normalfont
If $f$ has preference $\{w_1,w_2\}\succ\{w_2,w_3\}\succ\{w_3\}\succ\emptyset$ in $\Gamma$, then $\widehat{Ch}_f((0.6,0.6,0.5))=(0.6,0.6,0.4)$: It first consumes 0.6 of each of the type-$w_1$ and type-$w_2$ workers. Then, since $\{w_2,w_3\}$ is not available at the remaining subpopulation (0,0,0.5), it switches to consume 0.4 of the type-$w_3$ workers when the time reaches 1. In another case, $\widehat{Ch}_f((0.1,0.4,0.1))=(0.1,0.2,0.1)$: The firm first consumes 0.1 of each of the type-$w_1$ and type-$w_2$ workers, and then, it consumes 0.1 of each of the type-$w_2$ and type-$w_3$ workers from the remaining subpopulation (0,0.3,0.1). The procedure then terminates since there is no acceptable set of worker types available at the remaining subpopulation (0,0.2,0).
\end{example}

Formally, for any $\mathbf{x}\in[0,1]^W$, $\widehat{Ch}_f(\mathbf{x})$ is defined by the following recursive procedure.
Let $t_0=0$, $\mathbf{z}^0=\mathbf{x}$. Suppose $t_0$, $\mathbf{z}^0$, \ldots, $t_{k-1}$, $\mathbf{z}^{k-1}$ $(k\in\{1,2,\ldots,L\})$ are already defined, then define
\begin{equation}\label{proce}
\begin{aligned}
&t_k=\min\{1-\sum^{k-1}_{j=0}t_j, z^{k-1}_i\mid i\in\{1,2,\ldots,n\}\quad\text{and}\quad u^k_i\neq 0\}\\
&\mathbf{z}^k=\mathbf{z}^{k-1}-t_k\mathbf{u}^k
\end{aligned}
\end{equation}
After the sequences $(t_k)_{k=0}^L$ and $(\mathbf{z}^k)_{k=0}^L$ have been defined, we define
\begin{equation}\label{choice}
\widehat{Ch}_f(\mathbf{x})=\sum_{k=1}^Lt_k\mathbf{u}^k
\end{equation}
When we consider this procedure as the consumption process described above, for each $k\in\{1,2,\ldots,L\}$, $t_k$ is the time the firm spends in the consumption of $\mathbf{u}^k$; $\mathbf{z}^k$ is the remaining subpopulation after $\mathbf{u}^k$ has been consumed. If $\sum^{k'}_{j=1}t_j=1$ for some $k'\in\{1,2,\ldots,L-1\}$, then $t_{k''}=0$ for all $k''\in\{k'+1,\ldots,L\}$.
It turns out that the choice function $\widehat{Ch}_f$ constructed above is continuous and satisfies the revealed preference property.

\begin{lemma}\label{lma_cont}
\normalfont
$\widehat{Ch}_f$ is continuous and satisfies the \textbf{revealed preference property}: For any $\mathbf{x},\mathbf{x}'\in [0,1]^W$ with $\mathbf{x}'\leq \mathbf{x}$, $\widehat{Ch}_f(\mathbf{x})\leq \mathbf{x}'$ implies $\widehat{Ch}_f(\mathbf{x}')=\widehat{Ch}_f(\mathbf{x})$.
\end{lemma}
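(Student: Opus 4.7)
The plan is to handle continuity and the revealed-preference property separately, both by induction on the stage $k$ of the recursion (\ref{proce}). For continuity, I would show inductively that each $t_k$ and $\mathbf{z}^k$ is a continuous function of the input $\mathbf{x}$. The base case $t_0=0$, $\mathbf{z}^0=\mathbf{x}$ is immediate. In the inductive step, $t_k$ is the minimum of finitely many continuous functions of $\mathbf{x}$ (namely $1-\sum_{j<k}t_j$ and the residual supplies $z^{k-1}_i$ for $i$ with $u^k_i=1$), hence continuous; then $\mathbf{z}^k=\mathbf{z}^{k-1}-t_k\mathbf{u}^k$ is continuous. Continuity of $\widehat{Ch}_f(\mathbf{x})=\sum_{k=1}^L t_k\mathbf{u}^k$ follows at once.

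For the revealed-preference property, let $(t_k,\mathbf{z}^k)$ denote the sequence produced from $\mathbf{x}$ and $(t'_k,\mathbf{z}'^k)$ the one produced from $\mathbf{x}'$, and write $\Delta:=\mathbf{x}-\mathbf{x}'\geq \mathbf{0}$. The crucial observation is that, since $\widehat{Ch}_f(\mathbf{x})\leq \mathbf{x}'$, the final residual satisfies $\mathbf{z}^L=\mathbf{x}-\widehat{Ch}_f(\mathbf{x})\geq \Delta$, and because the residuals $\mathbf{z}^k$ are non-increasing in $k$ we have $\mathbf{z}^k\geq \mathbf{z}^L\geq \Delta$ for every $k$. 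I would then prove by induction on $k$ that $t'_k=t_k$, so that $\mathbf{z}'^k=\mathbf{z}^k-\Delta\geq \mathbf{0}$. For the inequality $t'_k\geq t_k$, using the identity $z^{k-1}_i=t_k+z^k_i$ whenever $u^k_i=1$ together with $\Delta_i\leq z^k_i$, each candidate $z'^{k-1}_i=z^{k-1}_i-\Delta_i$ is at least $t_k$, while the candidate $1-\sum_{j<k}t'_j$ equals $1-\sum_{j<k}t_j\geq t_k$ by the inductive hypothesis and the definition of $t_k$.

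The step I expect to be the main obstacle is the reverse inequality $t'_k\leq t_k$, which requires a case split on which term attains the minimum defining $t_k$. If $t_k=1-\sum_{j<k}t_j$, this same term appears in the minimum for $t'_k$ and delivers $t'_k\leq t_k$. If instead $t_k=z^{k-1}_{i^*}$ for some $i^*$ with $u^k_{i^*}=1$, then $z^k_{i^*}=0$, which by monotonicity of the residuals forces $z^L_{i^*}=0$ and therefore $\Delta_{i^*}=0$; thus $z'^{k-1}_{i^*}=z^{k-1}_{i^*}=t_k$ is itself a candidate in the minimum for $t'_k$, giving $t'_k\leq t_k$. Either way $t'_k=t_k$, and summing over $k$ yields $\widehat{Ch}_f(\mathbf{x}')=\sum_{k=1}^L t'_k\mathbf{u}^k=\sum_{k=1}^L t_k\mathbf{u}^k=\widehat{Ch}_f(\mathbf{x})$, completing the proof.
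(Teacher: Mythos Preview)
Your argument is correct. For the revealed-preference part you follow essentially the same inductive scheme as the paper---both proofs show $t'_k=t_k$ stage by stage---though your bookkeeping via $\Delta=\mathbf{x}-\mathbf{x}'$ and the global bound $\mathbf{z}^k\geq\mathbf{z}^L\geq\Delta$ is a clean repackaging of the paper's contradiction ``$\widehat{Ch}_f(\mathbf{x})\geq\sum_{j\leq k}t_j\mathbf{u}^j$ but $\mathbf{x}'\not\geq\sum_{j\leq k}t_j\mathbf{u}^j$.'' For continuity your route is genuinely more economical than the paper's: the paper carries out an explicit $\epsilon$--$\delta$ induction with the max metric, proving $|t_k-\widetilde t_k|<2^{k-1}v$ and $r(\mathbf{z}^k,\widetilde{\mathbf{z}}^k)<2^k v$ through a four-case analysis on whether the ``time budget'' $1-\sum_{j<k}t_j$ binds. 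You instead observe that the index set $\{i:u^k_i=1\}$ is fixed (it depends only on $\succ_f$, not on $\mathbf{x}$), so $t_k$ is a finite minimum of functions already known to be continuous by the inductive hypothesis, and $\mathbf{z}^k=\mathbf{z}^{k-1}-t_k\mathbf{u}^k$ inherits continuity; this avoids the case analysis entirely. The paper's approach does yield an explicit modulus of continuity, but that modulus is never used elsewhere, so nothing is lost by your shortcut.
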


The revealed preference property stated above is known as Sen's property $\alpha$, which is a premise of CKK's existence theorem. Let $\widehat{Ch}_F$ be the set of choice functions of all firms constructed in the above way. A continuum market induced from a matching market $\Gamma$ is also summarized as a tuple $\widehat{\Gamma}=(W,F,\succ_W,\widehat{Ch}_F)$, where $W, F$ and $\succ_W$ are the same as those of $\Gamma$. We can interpret the induced market $\widehat{\Gamma}$ as a schedule matching market of \cite{AG03}, where each worker schedules her time among different firms, and each firm schedules its time among different groups of workers. Different groups of workers bring different outputs per unit time for each firm, which decrease along with the firm's preference order in the original market $\Gamma$. For instance, suppose $f$ has preference $\{w_1,w_2\}\succ\{w_2,w_3\}\succ\{w_3\}\succ\emptyset$ in $\Gamma$. In the induced market, $\{w_1,w_2\}$ brings more output per unit time than $\{w_2,w_3\}$, which brings more output per unit time than $\{w_3\}$. When a firm faces an available supply of working time from workers, the firm solves a linear program to find the optimal schedule, which is the same as the choice function $\widehat{Ch}_f$.\footnote{I thank coeditor Federico Echenique for this insightful interpretation.}

A \textbf{matching} $M$ in $\widehat{\Gamma}$ assigns each firm a subpopulation of workers: $M=(M_f)_{f\in{\widetilde{F}}}$ such that $M_f\in [0,1]^W$ for all $f\in \widetilde{F}$ and $\sum_{f\in{\widetilde{F}}}M_f(w)=1$ for all $w\in W$. For any subpopulations $\mathbf{x}$,$\mathbf{x}'\in[0,1]^W$, we let $\mathbf{x}\vee \mathbf{x}'$ denote the subpopulation whose quantity of type-$w$ workers is $max\{x(w),x'(w)\}$. We use firms' choice functions to define firms' preferences over matchings. Given two matchings $M$ and $M'$, we say that firm $f$ prefers $M'_f$ to $M_f$, denoted as $M'_f\succeq_f M_f$, if $M'_f=\widehat{Ch}_f(M'_f\vee M_f)$. This relation is known as Blair's partial order in the literature (\citealp{B84}). We write $M'_f\succ_f M_f$ to indicate that $M'_f\succeq_f M_f$ and $M'_f\neq M_f$. For any matching $M$ and firm $f$, the subpopulation of workers assigned to firm $f$ or some firm worse than $f$ (according to their preferences) in $M$ is denoted as $A^{\preceq f}(M)\in [0,1]^W$, where
\begin{equation}\label{DfM}
A^{\preceq f}(M)(w)=\sum_{f'\in \widetilde{F}:f\succeq_w f'}M_{f'}(w)
\end{equation}
for each $w\in W$. $A^{\preceq f}(M)$ refers to the available subpopulation for $f$ in $M$ since it measures the amount of workers of each type that would rather match with $f$ in $M$. The concept of stability in the continuum market $\widetilde{\Gamma}$ is then defined as follows:

\begin{definition}\label{stable}
\normalfont
A matching $M=(M_f)_{f\in{\widetilde{F}}}$ in $\widehat{\Gamma}$ is \textbf{stable} if
\begin{description}
\item[(i)] (Individual Rationality) For each $f\in F$, $M_f= \widehat{Ch}_f(M_f)$; and for all $w\in W$, $M_f(w)=0$ for any $f$ that satisfies ${\o}\succ_w f$.

\item[(ii)] (No Blocking Coalition) There are no $f\in F$ and $M'_f\in [0,1]^W$ such that $M'_f\succ_f M_f$ and $M'_f\leq A^{\preceq f}(M)$.
\end{description}
\end{definition}

CKK proved that stable matchings exist when each firm has a continuous choice function.\footnote{See Theorem 2 of CKK. CKK's framework accommodates  indifferences in firms' preferences. The requirements of continuity and convex-valuedness on firms' choice correspondences in their setting reduce to the continuity of firms' choice functions in our setting.} Then by Lemma \ref{lma_cont}, we know that there exists a stable matching in the continuum market $\widehat{\Gamma}$.

\begin{example}\label{exam_continuum}
\normalfont
Consider a matching market $\Gamma$ where there are two firms $F=\{f_1,f_2\}$ and three workers $W=\{w_1,w_2,w_3\}$. The preferences of firms and workers are as follows.
\begin{align*}
&f_1: \{w_1,w_2\}\succ\{w_3\}\succ\emptyset \quad &w_1: &\quad f_1\succ f_2\succ{\o}\\
&f_2: \{w_1,w_2\}\succ\emptyset \quad &w_2: &\quad f_2\succ f_1\succ{\o}\\
&\quad &w_3: &\quad f_1\succ{\o}
\end{align*}

\end{example}

The above matching market induces a continuum market $\widehat{\Gamma}$, where the choice functions $\widehat{Ch}_{f_1}$ and $\widehat{Ch}_{f_2}$ are generated from the above firms' preference profile via (\ref{proce}) and (\ref{choice}). By Lemma \ref{lma_cont} and CKK's existence theorem, there is at least one stable matching in this continuum market. For instance, the matching $M$ in $\widehat{\Gamma}$, where $f_1$ is matched with $(0.5,0.5,0.5)$, $f_2$ with $(0.5,0.5,0)$, and ${\o}$ with $(0,0,0.5)$, is stable. Under our constructed firms' preferences, $\widehat{Ch}_{f_1}((0.5,0.5,0.5))=(0.5,0.5,0.5)$ and $\widehat{Ch}_{f_2}((0.5,0.5,0))=(0.5,0.5,0)$ indicate the individual rationality of $M$. Hiring more type-$w_3$ workers does not benefit $f_1$. $f_1$ and $f_2$ would both like to hire more workers of type-$w_1$ and type-$w_2$ at the ratio 1:1, but neither can draw workers of this ratio from the other. For example, $f_1$ would be better off when matched with $(0.6,0.6,0.4)$. This is because $\widehat{Ch}_{f_1}((0.6,0.6,0.4)\vee(0.5,0.5,0.5))=(0.6,0.6,0.4)$ indicates $(0.6,0.6,0.4)\succ_{f_1}(0.5,0.5,0.5)$. However, $f_1$ cannot draw any type-$w_2$ workers from $f_2$ since the type-$w_2$ workers prefer $f_2$ to $f_1$. In the language of Definition \ref{stable}, $f_1$ and $(0.6,0.6,0.4)$ do not form a blocking coalition because although $(0.6,0.6,0.4)\succ_{f_1}(0.5,0.5,0.5)$ holds, $(0.6,0.6,0.4)\leq A^{\preceq f_1}(M)$ does not hold where $A^{\preceq f_1}(M)=(1,0.5,1)$.

We say that a matching $M$ in $\widehat{\Gamma}$ is integral if $M_f(w)\in\{0,1\}$ for all $f\in \widetilde{F}$ and $w\in W$, otherwise we say that $M$ is fractional. Another observation is that each stable integral matching in the continuum market $\widehat{\Gamma}$ is also a stable matching in the original market $\Gamma$. For instance, the stable integral matching in $\widehat{\Gamma}$ where $f_1$ matches with (0,0,0), $f_2$ with (1,1,0), and ${\o}$ with $(0,0,1)$ is also a stable matching in $\Gamma$.

\begin{lemma}\label{lma_between}
\normalfont
If $M$ is a stable integral matching in the continuum market $\widehat{\Gamma}$, then $\mu$ is a stable matching in the original matching market $\Gamma$, where $\mu(w)=f$  and $w\in\mu(f)$ if $M_f(w)=1$ for each $w\in W$ and $f\in \widetilde{F}$.
\end{lemma}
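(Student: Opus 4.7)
The plan is to verify, in order, that $\mu$ is a well-defined matching, that it satisfies individual rationality, and that no blocking coalition exists. Integrality of $M$ together with the feasibility condition $\sum_{f\in\widetilde{F}}M_f(w)=1$ forces, for each worker $w$, exactly one firm $f$ to have $M_f(w)=1$; so $\mu(w)=f$ is well defined and the three conditions in the definition of a matching follow from $M_f(w)=1\iff w\in\mu(f)$.

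For individual rationality, the worker side is immediate from Definition \ref{stable}(i): if $\mu(w)=f$ then $M_f(w)=1>0$, forcing $f\succeq_w{\o}$. For firms I would rely on a small elementary fact about the constructed choice function, namely that for any $S\subseteq W$,
\[
\widehat{Ch}_f(ind(S))=ind(Ch_f(S)).
\]
This is because, applied to an integral input, the recursion in (\ref{proce}) skips $\mathbf{u}^k$ whenever $\mathbf{u}^k\not\leq ind(S)$ (some $z^{k-1}_i=0$ with $u^k_i=1$ gives $t_k=0$), and otherwise uses up the unit of time with $t_k=1$, so the output equals $ind(\mathbf{u}^{k^{*}})$ where $\mathbf{u}^{k^*}$ is the most preferred acceptable subset of $S$. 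Individual rationality $M_f=\widehat{Ch}_f(M_f)$ at $M_f=ind(\mu(f))$ then gives $\mu(f)=Ch_f(\mu(f))$.

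For the no-blocking-coalition part I would argue by contradiction, converting a blocker $(f,S)$ in $\Gamma$ into a blocker for $M$ in $\widehat{\Gamma}$. The naive attempt of taking $M'_f=ind(S)$ need not satisfy $M'_f=\widehat{Ch}_f(M'_f\vee M_f)$ (the Blair-order condition), so I would replace $S$ by $T:=Ch_f(S\cup\mu(f))$. Since $T\succeq_f S\succ_f\mu(f)$ strict preference gives $T\neq\mu(f)$, and every $w\in T\subseteq S\cup\mu(f)$ satisfies $f\succeq_w\mu(w)$ (either $\mu(w)=f$ when $w\in\mu(f)$, or via the blocking hypothesis when $w\in S$), so $ind(T)\leq A^{\preceq f}(M)$ by definition (\ref{DfM}). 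To upgrade this to $ind(T)\succ_f M_f$, I use the fact above to compute $\widehat{Ch}_f(ind(T\cup\mu(f)))=ind(Ch_f(T\cup\mu(f)))$, and then observe that $T\cup\mu(f)\subseteq S\cup\mu(f)$ together with the defining property of $T$ as the $\succ_f$-most preferred subset of $S\cup\mu(f)$ forces $Ch_f(T\cup\mu(f))=T$ (it is $\succeq_f T$ because $T\subseteq T\cup\mu(f)$, and $\preceq_f T$ because it is a subset of $S\cup\mu(f)$). Thus $(f,ind(T))$ blocks $M$ in $\widehat{\Gamma}$, contradicting the stability of $M$.

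The main obstacle is exactly that last step. The blocker in $\Gamma$ is defined through the strict preference $S\succ_f\mu(f)$, while the blocker in $\widehat{\Gamma}$ is defined through the choice-function identity $M'_f=\widehat{Ch}_f(M'_f\vee M_f)$; the set $S$ itself need not be self-chosen against $\mu(f)$. Promoting $S$ to $T=Ch_f(S\cup\mu(f))$ and verifying $Ch_f(T\cup\mu(f))=T$ via the ``most preferred subset'' argument is the decisive move that bridges the two stability notions and makes the reduction work.
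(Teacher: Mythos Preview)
Your proof is correct and follows essentially the same route as the paper: both first establish $\widehat{Ch}_f(ind(S))=ind(Ch_f(S))$ on integral inputs and use it for individual rationality, and for the no-blocking step both replace a discrete blocker $(f,S)$ by $(f,T)$ with $T=Ch_f(S\cup\mu(f))$ and show that $ind(T)$ blocks $M$ in $\widehat{\Gamma}$. The only cosmetic difference is that the paper obtains $\widehat{Ch}_f(ind(T)\vee M_f)=ind(T)$ by invoking the revealed preference property of $\widehat{Ch}_f$ (Lemma~\ref{lma_cont}), whereas you verify $Ch_f(T\cup\mu(f))=T$ directly from the ``most-preferred-subset'' characterization of $Ch_f$.
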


\subsection{Stability-preserving turnovers}\label{Sec_preserve}

Every matching market $\Gamma$ induces a continuum market $\widehat{\Gamma}$. Motivated by Lemma \ref{lma_between}, we turn to investigate when there exists a stable integral matching in $\widehat{\Gamma}$. According to CKK's existence theorem, there always exists a stable matching $M$ in $\widehat{\Gamma}$, which may be fractional or integral. Consider the stable fractional matching in Example \ref{exam_continuum}, where $f_1$ is matched with $(0.5,0.5,0.5)$, $f_2$ with $(0.5,0.5,0)$, and ${\o}$ with $(0,0,0.5)$. We obtain a stable integral matching when all the workers matched with $f_2$ switch to $f_1$ and the type-$w_3$ workers previously hired by $f_1$ become unemployed. Such ``turnover" of workers in the continuum market preserves stability and produces a stable integral matching. However, consider the continuum market induced by (\ref{exam_in1}) and the stable fractional matching where $f_1$ is matched with $(0.5,0.5)$, and $f_2$ with $(0.5,0.5)$. We cannot obtain a stable integral matching by any stability-preserving turnover of workers.\footnote{It happens that there is no stability-preserving turnovers of workers in this continuum market, because there is only one stable matching; see Example 3 of CKK.}

Therefore, a stable matching always exists in the original market when  stability-preserving turnover toward a stable integral matching exists in the continuum market. To formalize some of the stability-preserving turnovers, we generalize the concept of matching to ``pseudo-matching" in the continuum market and introduce ``stable transformations" that operate on pseudo-matchings in a succinct manner. We find that some special stability-preserving turnovers toward an integral matching can be expressed as a series of our stable transformations.

In the continuum market $\widehat{\Gamma}$, we call $M=(M_f)_{f\in{\widetilde{F}}}$ a \textbf{pseudo-matching} if $M_f\in [0,1]^W$ for all $f\in \widetilde{F}$. A pseudo-matching may assign more or less than quantity 1 of some worker type to firms. A pseudo-matching is a matching if $\sum_{f\in{\widetilde{F}}}M_f(w)=1$ for each $w\in W$. We say that a pseudo-matching $M$ is \textbf{stable} if it satisfies condition (i) and (ii) in Definition \ref{stable}. In other words, a pseudo-matching $M$ is stable if $M$ is a stable matching when the quantity of each worker type $w$ in the market is adjusted to its current quantity $\sum_{f\in{\widetilde{F}}}M_f(w)$ in $M$.

\begin{example}\label{exam_pseudo}
\normalfont
Consider the continuum market $\widehat{\Gamma}$ in Example \ref{exam_continuum}. Let $M$ be the pseudo-matching where $M_{f_1}=(0.6,0.6,0.3)$, $M_{f_2}=(0.3,0.3,0)$, and $M_{{\o}}=(0,0,0)$, then $M$ is a stable pseudo-matching. The reason is similar to that in Example \ref{exam_continuum}.

Let $M'$ be the pseudo-matching where $M'_{f_1}=(0,0,0)$, $M'_{f_2}=(0.3,0.3,0)$ and $M'_{{\o}}=(0,0,0)$. Readers can check that $M'$ is also a stable pseudo-matching.
\end{example}

Given a stable pseudo-matching $M$ in $\widehat{\Gamma}$, each of the following transformations on $M$ produces a stable pseudo-matching $M'$.

\textbf{Type-1 stable transformation}: Choose a firm $f'$ from $F$ such that $\sum_{j=1}^Lt_j<1$ holds in the procedure (\ref{proce}) that computes $\widehat{Ch}_{f'}(M_{f'})$. Let $M'_{f'}=\mathbf{0}$ and let $M'_{f}=M_f$ for all $f\neq f'$.

\textbf{Type-2 stable transformation}: Choose a firm $f'$ from $F$. Consider the procedure (\ref{proce}) that computes $\widehat{Ch}_{f'}(M_{f'})=\sum_{j=1}^Lt_j\mathbf{u}^j$. Choose an index $k\in\{1,2,\ldots,L\}$ that satisfies $t_k>0$, let $M'_{f'}=\mathbf{u}^k$. Let $M'_{f}=M_f$ for all $f\neq f'$.

\textbf{Type-3 stable transformation}: Choose a worker type $w'\in W$ that satisfies $M_{{\o}}(w')\in(0,1)$; let $M'_{{\o}}(w')=0$ or $1$. Let $M'_{{\o}}(w)=M_{{\o}}(w)$ for all $w\neq w'$, and $M'_{f}=M_f$ for all $f\in F$.

\begin{example}
\normalfont
Consider the continuum market $\widehat{\Gamma}$ in Example \ref{exam_continuum} and the stable pseudo-matchings $M$ and $M'$ in Example \ref{exam_pseudo}. $M\rightarrow M'$ is a type-1 stable transformation. Note that in the procedure (\ref{proce}) that computes $\widehat{Ch}_{f_1}(M_{f_1})$, we have $\sum_{j=1}^Lt_j=0.9<1$. Since $f_1$ can not draw any subpopulation that includes both types from $f_2$ in $M$, $f_1$ can do no such thing in $M'$ either, and thus, this transformation maintains stability.

Let $M''$ be the pseudo-matching where $M''_{f_1}=(1,1,0)$, $M''_{f_2}=(0.3,0.3,0)$, and $M''_{{\o}}=(0,0,0)$; then, $M\rightarrow M''$ is a type-2 stable transformation. This transformation maintains stability because $f_2$ can not draw any subpopulation that includes both types from $f_1$ no matter when $f_1$ is matched with $(0.6,0.6,0.3)$ or $(1,1,0)$.

Let $\widetilde{M}$ be the pseudo-matching where $\widetilde{M}_{f_1}=(0,0,0)$, $\widetilde{M}_{f_2}=(0.3,0.3,0)$, and $\widetilde{M}_{{\o}}=(0,0.3,0)$. Let $\widetilde{M}'$ be the pseudo-matching where $\widetilde{M}'_{f_1}=(0,0,0)$, $\widetilde{M}'_{f_2}=(0.3,0.3,0)$ and $\widetilde{M}'_{{\o}}=(0,1,0)$. Then, $\widetilde{M}\rightarrow M'$ and $\widetilde{M}\rightarrow \widetilde{M}'$ are both type-3 stable transformations. Unmatched workers in a stable matching can be viewed as redundant for firms; thus, stability is not affected when the quantity of unmatched workers varies.
\end{example}

\begin{lemma}\label{lma_trans}
\normalfont
Each of the type-1, type-2, and type-3 stable transformations on a stable pseudo-matching produces a stable pseudo-matching.
\end{lemma}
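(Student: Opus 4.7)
My plan is to verify the two stability conditions in Definition \ref{stable}---individual rationality (IR) and the absence of a blocking coalition---separately for each of the three transformation types, ultimately reducing every potential blocking in $M'$ to a blocking in the original stable pseudo-matching $M$.

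For type-1 ($M'_{f'}=\mathbf{0}$), IR is immediate: $\widehat{Ch}_{f'}(\mathbf{0})=\mathbf{0}$ by the procedure in (\ref{proce}); all other firms' allocations and worker-side IR are preserved because $M'$ only shrinks allocations. For no-blocking, the case $f\neq f'$ is direct because $A^{\preceq f}(M')\leq A^{\preceq f}(M)$ coordinatewise, so any blocker in $M'$ is a blocker in $M$. The delicate case is $f=f'$: suppose $M''_{f'}\leq A^{\preceq f'}(M')$ with $M''_{f'}\succ_{f'}\mathbf{0}$, so $M_{f'}+M''_{f'}\leq A^{\preceq f'}(M)$. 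Set $\widehat{M}:=\widehat{Ch}_{f'}(M_{f'}+M''_{f'})$. Sen's $\alpha$ from Lemma \ref{lma_cont} applied to $M_{f'}\leq M_{f'}+M''_{f'}$ forces $\widehat{M}\not\leq M_{f'}$ unless $\widehat{M}=M_{f'}$. The hypothesis $\sum_j t_j<1$ provides slack time in the procedure on $M_{f'}$, and since $M''_{f'}=\widehat{Ch}_{f'}(M''_{f'})\neq \mathbf{0}$ contains an acceptable set $\mathbf{u}^{k'}$ with a positive weight, the procedure on $M_{f'}+M''_{f'}$ must consume part of $\mathbf{u}^{k'}$ in that slack, ruling out $\widehat{M}=M_{f'}$. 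A second application of Sen's $\alpha$ to $\widehat{M}\leq \widehat{M}\vee M_{f'}\leq M_{f'}+M''_{f'}$ then yields $\widehat{M}\succ_{f'}M_{f'}$ with $\widehat{M}\leq A^{\preceq f'}(M)$, contradicting stability of $M$.

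For type-2 ($M'_{f'}=\mathbf{u}^k$), the key IR claim is $\widehat{Ch}_{f'}(\mathbf{u}^k)=\mathbf{u}^k$, which I would show by contrapositive: if some $j<k$ had $\mathrm{supp}(\mathbf{u}^j)\subseteq \mathrm{supp}(\mathbf{u}^k)$, then in the procedure on $M_{f'}$ some coordinate on $\mathrm{supp}(\mathbf{u}^j)$ would have to be exhausted before step $k$ (since $\mathbf{u}^j$ is available whenever $\mathbf{u}^k$ is and $t_j$ is chosen maximally given the time budget), which precludes the availability of $\mathbf{u}^k$ at $\mathbf{z}^{k-1}$ and contradicts $t_k>0$. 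Worker IR for $w\in \mathrm{supp}(\mathbf{u}^k)$ follows from $M_{f'}(w)\geq t_k>0$ and the worker IR of $M$. The no-blocking argument parallels type-1: coordinates $w$ where $A^{\preceq f}$ grew satisfy $u^k(w)=1$ and $M''_f(w)\leq 1$ componentwise, so the construction of $\widehat{M}:=\widehat{Ch}_{f'}(\mathbf{u}^k+M''_{f'})$ together with the Sen-$\alpha$ argument goes through essentially verbatim.

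For type-3, IR is immediate since $\widehat{Ch}_{{\o}}$ is the identity and firm allocations are unchanged. When $M'_{{\o}}(w')=0$, $A^{\preceq f}(M')\leq A^{\preceq f}(M)$ coordinatewise and every blocker in $M'$ is a blocker in $M$. When $M'_{{\o}}(w')=1$, a candidate blocker $M''_f$ may exceed $A^{\preceq f}(M)(w')$ on coordinate $w'$; I would truncate $\widetilde{M}''_f(w'):=\min(M''_f(w'),A^{\preceq f}(M)(w'))$ (other coordinates unchanged) and invoke Sen's $\alpha$ along with the fact that each acceptable set uses at most one unit of $w'$ to show $\widetilde{M}''_f\succ_f M_f$ still, producing a blocker in $M$ and the desired contradiction. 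The main obstacle across all three cases is the $f=f'$ no-blocking argument, which pivots on combining the relevant slack or acceptable-set condition with Sen's $\alpha$ to simultaneously rule out the trivial output $\widehat{M}=M_{f'}$ and establish the strict preference required for a blocking coalition.
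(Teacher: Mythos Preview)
Your overall plan (reduce blocking in $M'$ to blocking in $M$) matches the paper's, and your type-1 and IR arguments are essentially right. The genuine gap is in the no-blocking argument for type-2, and to a lesser extent the $M'_{\o}(w')=1$ branch of type-3.

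For type-2 you assert that ``the no-blocking argument parallels type-1,'' but the engine of your type-1 argument for $f\neq f'$ is the coordinatewise inequality $A^{\preceq f}(M')\leq A^{\preceq f}(M)$, and this \emph{fails} for type-2: setting $M'_{f'}=\mathbf{u}^k$ can strictly increase $M_{f'}(w)$ on coordinates with $u^k(w)=1$, hence increase $A^{\preceq f}(M')(w)$ for any $f$ with $f\succeq_w f'$. Your sentence acknowledges that $A^{\preceq f}$ can grow but then jumps to the construction $\widehat{M}=\widehat{Ch}_{f'}(\mathbf{u}^k+M''_{f'})$, which (i) only addresses the $f=f'$ case, (ii) may lie outside $[0,1]^W$, and (iii) does not obviously satisfy $\widehat{M}\leq A^{\preceq f'}(M)$, so Sen's $\alpha$ alone does not deliver a blocker in $M$. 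The same difficulty---a blocking candidate that exceeds $A^{\preceq f}(M)$ on one coordinate---arises in your type-3 truncation: after you set $\widetilde{M}''_f(w')=\min(M''_f(w'),A^{\preceq f}(M)(w'))$, Sen's $\alpha$ would need $M''_f\leq \widetilde{M}''_f\vee M_f$, which fails precisely on the truncated coordinate, so you have not shown $\widetilde{M}''_f\succ_f M_f$.

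The paper closes these gaps by first proving a \emph{support-based} characterization of blocking (its Lemma~\ref{lma_block}): $f$ has a blocker in a pseudo-matching $M$ iff some $\mathbf{u}^{k'}$ with $\sum_{j\leq k'}t_j<1$ satisfies $\mathbf{u}^{k'}\leq \mathrm{integral}(A^{\prec f}(M))$. The point is that only the \emph{support} of $A^{\prec f}(M)$ matters, because a firm can scale any blocking subpopulation by an arbitrarily small $\epsilon>0$. Once you have this, type-2 and type-3 are immediate: in both transformations $M'_{f'}(w)>0$ implies $M_{f'}(w)>0$ (and $M'_{\o}(w')>0$ iff $M_{\o}(w')>0$), so $\mathrm{integral}(A^{\prec f}(M'))\leq \mathrm{integral}(A^{\prec f}(M))$ for every $f$, and any blocker in $M'$ yields one in $M$. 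I recommend you prove this support characterization first and route all three cases through it; the direct magnitude-plus-Sen's-$\alpha$ approach does not go through for type-2.
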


When we implement several stable transformations on a stable fractional matching, the matching becomes ``less fractional" and ultimately becomes integral, but may it assign more or less than quantity 1 of some worker types. Thus, we also want a series of transformations to be ``balanced", that is, to assign 1 quantity of each worker type in the output. Because the stable transformations transform stable fractional matchings in a succinct manner, we can reduce ``balanced" stable transformations to integral solutions to a system of linear equations, where the total unimodularity condition applies.

\subsection{Illustrative example}\label{Sec_illu}

\begin{example}\label{illustrate}
\normalfont
Consider the market of Example \ref{exam_continuum} and the stable matching $M$ in the continuum market where
$M_{f_1}=(0.5,0.5,0.5)$, $M_{f_2}=(0.5,0.5,0)$, and $M_{{\o}}=(0,0,0.5)$. In the following, we represent a pseudo-matching in $\widehat{\Gamma}$ with a matrix, the rows of which represent the subpopulations matched with the firms. $M$ is then represented by the following $3\times3$ matrix.

\begin{center}
\begin{tabular}
[c]{c|ccc}
& $w_1$ & $w_2$ & $w_3$\\\hline
$f_1$ & $0.5$ & $0.5$ & $0.5$\\
$f_2$ & $0.5$ & $0.5$ & $0$\\
${\o}$ & $0$ & $0$ & $0.5$%
\end{tabular}
\end{center}

Consider the following transformations on $M$.\\

\begin{tikzpicture}[baseline = (M.west)]
    \tikzset{brace/.style = {decorate, decoration = {brace, amplitude = 5pt}, thick}}
    \matrix(M)
    [
        matrix of math nodes,
        left delimiter = (,
        right delimiter = )
    ]
    {
0.5 & 0.5 & 0.5\\
0.5 & 0.5 & 0\\
0 & 0 & 0.5\\
    };
\end{tikzpicture}
$\stackrel{(1)}{\Longrightarrow}$
\begin{tikzpicture}[baseline = (M.west)]
    \tikzset{brace/.style = {decorate, decoration = {brace, amplitude = 5pt}, thick}}
    \matrix(M)
    [
        matrix of math nodes,
        left delimiter = (,
        right delimiter = )
    ]
    {
0.5 & 0.5 & 0.5\\
0 & 0 & 0\\
0 & 0 & 0.5\\
    };
\end{tikzpicture}
$\stackrel{(2)}{\Longrightarrow}$
\begin{tikzpicture}[baseline = (M.west)]
    \tikzset{brace/.style = {decorate, decoration = {brace, amplitude = 5pt}, thick}}
    \matrix(M)
    [
        matrix of math nodes,
        left delimiter = (,
        right delimiter = )
    ]
    {
1 & 1 & 0\\
0 & 0 & 0\\
0 & 0 & 0.5\\
    };
\end{tikzpicture}

$\stackrel{(3)}{\Longrightarrow}$
\begin{tikzpicture}[baseline = (M.west)]
    \tikzset{brace/.style = {decorate, decoration = {brace, amplitude = 5pt}, thick}}
    \matrix(M)
    [
        matrix of math nodes,
        left delimiter = (,
        right delimiter = )
    ]
    {
1 & 1 & 0\\
0 & 0 & 0\\
0 & 0 & 1\\
    };
\end{tikzpicture}\\

Transformations (1), (2), and (3) are type-1, type-2, and type-3 stable transformations, respectively. It turns out that we finally reach a stable integral matching $M'$ in $\widehat{\Gamma}$, where $M'_{f_1}=(1,1,0)$, $M'_{f_2}=(0,0,0)$, and $M'_{{\o}}=(0,0,1)$. The outcome is not only a stable integral pseudo-matching but also a matching that assigns precisely quantity 1 of each worker type.

A critical observation is that we can reach a stable integral matching through stable transformations on $M$ when there is a nonnegative integral solution to the following system of linear equations.
\medskip

\begin{tikzpicture}[baseline = (M.west)]
    \tikzset{brace/.style = {decorate, decoration = {brace, amplitude = 5pt}, thick}}
    \matrix(M)
    [
        matrix of math nodes,
        left delimiter = (,
        right delimiter = )
    ]
    {
        1 & 1 & 0 & 0 & 0\\
        0 & 0 & 1 & 1 & 0\\
        1 & 0 & 1 & 0 & 0\\
        1 & 0 & 1 & 0 & 0\\
        0 & 1 & 0 & 0 & 1\\
            };
     \draw[decorate,decoration={brace,mirror,amplitude=3mm},thick]
        ($(M-3-1.north east) + (-1, 0)$)
        -- node[left = 18pt]{$B^*$}
        ($(M-5-1.south east) + (-1, 0)$);
\end{tikzpicture}
\begin{tikzpicture}[baseline = (M.west)]
    \tikzset{brace/.style = {decorate, decoration = {brace, amplitude = 5pt}, thick}}
        \matrix(M)
    [
        matrix of math nodes,
        left delimiter = (,
        right delimiter = )
    ]
    {
       z_1\\
z_2\\
z_3\\
z_4\\
z_5\\
    };
\end{tikzpicture}$=$
\begin{tikzpicture}[baseline = (M.west)]
    \tikzset{brace/.style = {decorate, decoration = {brace, amplitude = 5pt}, thick}}
        \matrix(M)
    [
        matrix of math nodes,
        left delimiter = (,
        right delimiter = )
    ]
    {
1\\
1\\
1\\
1\\
1\\
    };
\end{tikzpicture}

\medskip
Let $B$ denote the $5\times 5$ matrix on the left. The first and second rows of $B$ are constraints for preserving stability; the third to fifth rows of $B$ are constraints for assigning quantity 1 of each worker type. Let $B^*$ denote the $3\times 5$ submatrix that includes the third to fifth rows of $B$. Let $B_i$ and $B^*_i$ be the vectors of the $i$-th column of $B$ and $B^*$, respectively. $B^*_1$ and $B^*_2$ correspond to $\{w_1,w_2\}$ and $\{w_3\}$ in $f_1$'s preference list, respectively. $B^*_3$ and $B^*_4$ correspond to $\{w_1,w_2\}$ and $\emptyset$ in $f_2$'s preference list, respectively. $B^*_5$ corresponds to the type-$w_3$ workers matched with firm ${\o}$. $\mathbf{z}=(0.5,0.5,0.5,0.5,0.5)$ is a solution to this system, which refers to matching $M$ as follows.

\begin{equation}\label{z-matching}
M=\left(
             \begin{aligned}
             f_1  \qquad\qquad & \qquad\qquad f_2 \qquad & \qquad {\o} \\
             z_1(1,1,0)+z_2(0,0,1)\quad&\quad z_3(1,1,0)+z_4(0,0,0)\quad&\quad z_5(0,0,1)
             \end{aligned}
\right)
\end{equation}

$(z_1,z_2)=(0.5,0.5)$ corresponds to $(t_1,t_2)$ in the procedure (\ref{proce}) that computes $\widehat{Ch}_{f_1}(0.5,0.5,0.5)$. $(z_3,z_4)=(0.5,0.5)$, corresponds to $(t_1,1-t_1)$ in the procedure (\ref{proce}) that computes $\widehat{Ch}_{f_2}(0.5,0.5,0)$. $z_5=0.5$ corresponds to $M_{{\o}}=(0,0,0.5)$. Note that (\ref{z-matching}) with any integral $\mathbf{z}\in \{0,1\}^5$ that satisfies $z_1+z_2=1$ and $z_3+z_4=1$ (guaranteed by the first and second rows of $B$) corresponds to a stable integral pseudo-matching obtained from $M$ via stable transformations. Then, any $\mathbf{z}\in \{0,1\}^5$ that satisfies $B\mathbf{z}=\mathbf{1}$ corresponds to a stable integral matching since $B^*\mathbf{z}=\mathbf{1}$ requires that the workers of each type assigned to firms is of quantity 1. For instance, $\mathbf{z}'=(1,0,0,1,1)$ is a solution to the system, and we obtain the stable integral matching $M'$ by plugging $\mathbf{z}'$ into (\ref{z-matching}).

Therefore, given a stable matching $M$ in the continuum market $\widehat{\Gamma}$, we can construct a system of linear equations $B\mathbf{z}=\mathbf{1}$. Our construction guarantees that the polytope $\{\mathbf{z}\mid B\mathbf{z}=\mathbf{1}, \mathbf{z}\geq0\}$ is nonempty because $M$ corresponds to a nonnegative solution to this system of equations. Our construction also guarantees that every integral point of this polytope corresponds to a stable matching in the original market. Now, we can apply a standard result from integer programming to this problem. Under the condition that matrix $B$ is unimodular, all vertices of the polytope $\{\mathbf{z}\mid B\mathbf{z}=\mathbf{1}, \mathbf{z}\geq0\}$ are integral (\citealp{HK56}; see also Theorem 21.5 of \citealp{S86}). Since the polytope is nonempty, we further know that there is at least an integral vertex on this polytope.

Finally, it is not difficult to find out that $B$ is unimodular when the firms' demand type $\mathcal{D}$ is totally unimodular. In this example, we have $\mathcal{D}_{f_1}=\{(1,1,0),(1,1,-1),(0,0,1)\}$, $\mathcal{D}_{f_2}=\{(1,1,0)\}$. The firms' demand type is $\mathcal{D}=\{(1,1,0),(1,1,-1),(0,0,1)\}$, which is totally unimodular. Now, we illustrate why total unimodularity of $\mathcal{D}$ implies unimodularity of matrix $B$. For instance, $B_1,B_2,B_3$, and $B_5$ are linearly independent. The total unimodularity of $\mathcal{D}$ guarantees that matrix $[B^*_1-B^*_2, B^*_5]$ is unimodular,\footnote{This is because of the following two facts: (i) Matrix $H$ is totally unimodular if and only if $[H,I]$ is unimodular ($I$ is the identity matrix). (ii) Any column submatrix of a unimodular matrix is unimodular again. Fact (i) implies the unimodularity of $[\mathcal{D},I]$. Thus, fact (ii) implies the unimodularity of $[B^*_1-B^*_2, B^*_5]$ since $B^*_1-B^*_2$ is an element of $\mathcal{D}$ and $B^*_5$ is a column from $I$.\label{footnote}} and thus the set $\{B^*_1-B^*_2, B^*_5\}$ can be extended to a basis for $\mathbb{R}^3$, of integer vectors, with determinant $\pm1$ (e.g., by extending with $(1,0,0)$). Then, we know that $\{B_1, B_3\}\cup\{B_1-B_2, B_5\}$ can be extended to a basis for $\mathbb{R}^5$, of integer vectors, with determinant $\pm1$ (e.g., by extending with $(0,0,1,0,0)$). Therefore, such extension also exists for the set $\{B_1, B_2, B_3, B_5\}$.\footnote{Since subtracting one column from another column leaves the determinant unchanged, we know that $\{B_1, B_2, B_3, B_5\}\cup\{(0,0,1,0,0)\}$ is also an integral basis for $\mathbb{R}^5$, with determinant $\pm1$.}
\end{example}

\subsection{Relation to the LP method}\label{Sec_LP}

Linear programming has been used to select stable matchings of specific properties, such as optimal marriage and median stable matching. Both the LP method and our method represent matchings by their indicator vectors and study polytopes related to these vectors. We discuss the differences between the two methods below.

First, the LP method is used to find stable matchings of specific properties in a discrete matching market when firms have so-called \emph{responsive} preferences, a special case of substitutable preferences. By contrast, we investigated under what conditions stable matching exist in the market with complementarities. Second, the LP method uses the definition of stable matching to define a polytope whose vertices are the stable matchings of the market; then, the problems of optimal marriage and median stable matching reduce to linear programs. By contrast, we address our problem by constructing a fictitious market where each worker is divisible and investigating the stable integral matchings of this market. We incorporate the results of CKK and tools from integer programming to obtain our results.

\section{Application\label{Sec_App}}

This section presents an application of matching firms with specialists, which has a practical economic meaning. We describe the structure of firms' acceptable sets of workers as a directed rooted tree, which we call the ``technology tree". Each vertex of the technology tree represents a technology that requires a set of workers to implement. Each edge of the tree is an upgrade from one technology to another that requires more workers. A worker is called a specialist if she engages in only one upgrade. We show that firms' demand type is totally unimodular when firms have unit-demand preferences over the technologies of a technology tree where each worker is a specialist. We first provide an illustrative example as follows.

\begin{example}\label{exam_app}
\normalfont
Consider a market with two firms $f_1$, $f_2$, and four workers $w_1, w_2, w_3, w_4$. A technology tree is depicted as follows.

\begin{center}
\begin{tikzpicture}[thick,->]
	\node {$v_0:\emptyset$}
	child {node {$v_1:\{w_1,w_2\}$}}
   	child [missing] {}	
	child { node {$v_2:\{w_3\}$}
		child [missing] {}
		child [missing] {}
		child {node {$v_3:\{w_3,w_4\}$}
			child [missing] {}
		}
	};
	\end{tikzpicture}
\end{center}

Each vertex from $\{v_0,v_1,v_2,v_3\}$ represents a technology that requires the set of workers on the right to implement. The root $v_0$ represents no technology and requires no worker. Each directed edge is an upgrade from one technology to another, where more workers should be employed to implement the upgrade. If $e=vv'$ is an edge from vertex $v$ to vertex $v'$, where $w$ is not demanded by $v$ but demanded by $v'$, we say that $w$ engages in the upgrade $e$ or $vv'$. For example, $w_1$ and $w_2$ both engage in the upgrade $v_0v_1$. Each firm possesses some of the technologies and has a preference order over the technologies it possesses, which induces its preference over the sets of workers required for the technologies. For instance, $f_1$ may possess $v_1$ and $v_2$; $f_2$ may possess $v_1$ and $v_3$; and $v_0$ is trivially possessed by both firms. $f_1$ and $f_2$ may have the following preferences.

\begin{equation}\label{pre_app}
\begin{aligned}
&f_1: \{w_1,w_2\}\succ \{w_3\}\succ \emptyset\\
&f_2: \{w_3,w_4\}\succ \{w_1,w_2\}\succ \emptyset
\end{aligned}
\end{equation}
\end{example}

With firms' preferences induced in this way, we find that firms' demand type is totally unimodular when the technology tree satisfies the following condition.

\begin{equation}\label{special_word}
\text{\emph{Each worker is a specialist that engages in only one upgrade.}}
\end{equation}

For example, given that $w_1$ engages in the upgrade $v_0v_1$, this condition requires that $w_1$ cannot engage in other upgrades. When firms' acceptable sets of workers are from a technology tree that satisfies this condition, the firms' demand type is totally unimodular since its elements form a \emph{network matrix} (\citealp{T65}, see also Chapter 19.3 of \citealp{S86}). See an illustration in Section \ref{proof_special}. To see the role of (\ref{special_word}), consider the firms' preference profile in (\ref{exam_in1}). There are two possible structures for the technology tree that induces the firms' preferences.
\begin{center}
\begin{tikzpicture}[thick,->]
	\node {$v_0:\emptyset$}
	child {node {$v_1:\{w_1\}$}}
   	child [missing] {}	
	child { node {$v_2:\{w_2\}$}
		child [missing] {}
		child [missing] {}
		child {node {$v_3:\{w_1,w_2\}$}
			child [missing] {}
		}
	};
\end{tikzpicture}
\begin{tikzpicture}[thick,->]
	\node {$v_0:\emptyset$}
	child {node {$v_1:\{w_1\}$}}
child [missing] {}
   	child { node {$v_2:\{w_2\}$}}
   child [missing] {}
    child { node {$v_3:\{w_1,w_2\}$}};
\end{tikzpicture}
\end{center}
Both technology trees violate condition (\ref{special_word}): On the left, $w_1$ engages in both upgrades $v_0v_1$ and $v_2v_3$. On the right, $w_1$ engages in both upgrades $v_0v_1$ and $v_0v_3$.

Formally, a \textbf{technology tree} $T=(V,E,W)$ is a directed rooted tree $(V,E)$ defined on a set of workers $W$. $V=\{v_0,v_1,\ldots,v_l\}$ is a set of vertices with $v_0$ as the root. Each vertex $v\in V$ represents a technology that requires a subset of workers $W^v\subseteq W$ to implement. The root $v_0$ represents no technology and requires no worker: $W^{v_0}=\emptyset$. $E$ is a set of directed edges, all of which point away from the root. For each edge $e\in E$ from vertex $v$ to vertex $v'$, $W^v\subset W^{v'}$, and we let $W^e=W^{v'}\setminus W^{v}$. We say that worker $w$ is a \textbf{specialist} in $T$ if she engages in only one upgrade: $\mid\{e\in E\mid w\in W^e\}\mid=1$. We study firms' preferences where each firm wants to hire one set of workers from a common technology tree.

\begin{definition}
\normalfont
Firms have unit-demand preferences over a technology tree $T=(V,E,W)$ if for each $f\in F$ and each $S\subseteq W$, $S\succ_f \emptyset$ implies $S=W^v$ for some $v\in V$.
\end{definition}

In words, firms have unit-demand preferences over a technology tree if those sets of workers not on the tree are not acceptable for all firms. Each firm wants to employ one set from the technology tree and can have arbitrary preference order over the sets of workers on the tree.\footnote{Note that if a firm has preference $\{w_3\}\succ \{w_3,w_4\}\succ \emptyset$ in Example \ref{exam_app}, this preference is essentially equivalent to $\{w_3\}\succ\emptyset$.  } We then have the following theorem.

\begin{theorem}\label{thm_special}
\normalfont
Firms' demand type is totally unimodular if firms have unit-demand preferences over a technology tree where each worker is a specialist.
\end{theorem}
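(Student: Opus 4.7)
The plan is to exhibit the demand-type vectors explicitly and recognise them, after collapsing duplicate rows, as columns of a network matrix in the sense of Tutte, which is classically known to be totally unimodular. Because each firm has a unit-demand preference over the technology tree, the image of $Ch_f$ is contained in $\{W^v : v \in V\}$ for every firm $f$; hence every vector of $\mathcal{D}$ has the form $ind(W^v)-ind(W^{v'})$ for some pair $v,v'\in V$ (including the degenerate case $v'=v_0$, which yields $ind(W^v)$). Moreover, since each worker is a specialist, the edge sets $(W^e)_{e\in E}$ partition $W$, and for any vertex $v$ one has $W^v=\bigsqcup_{e\in P(v_0,v)}W^e$, where $P(v_0,v)$ denotes the unique directed path from $v_0$ to $v$ in $T$. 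Workers lying in the same block $W^e$ therefore have identical coordinates across every demand-type vector. Let $N$ be the matrix obtained by replacing each block $W^e$ by a single representative row indexed by $e$; its columns are then vectors of the form $ind_{P(v_0,v)}-ind_{P(v_0,v')}$ in $\mathbb{R}^E$.

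The main work is to identify $N$ as a network matrix. For any pair $v,v'\in V$, let $u$ be their least common ancestor in $T$; cancellation of the common prefix $P(v_0,u)$ shows that the column reduces to $ind_{P(u,v)}-ind_{P(u,v')}$. Now traverse $T$ from $v'$ up to $u$ (against the orientation of the intervening edges) and then from $u$ down to $v$ (with the orientation); each tree edge on this walk receives $+1$ if it agrees with the direction of the walk and $-1$ if it disagrees, which is exactly the signature assigned to this column when one adjoins an arc from $v'$ to $v$ in Tutte's network-matrix construction. Since network matrices are totally unimodular (\cite{T65}; see also Chapter 19.3 of \cite{S86}), every square submatrix of $N$ has determinant $0$ or $\pm 1$.

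To conclude, consider an arbitrary square submatrix $M$ of the full demand-type matrix. If $M$ contains two rows arising from workers in the same block $W^e$, those rows coincide and $\det M=0$; otherwise $M$ is (up to permuting rows) a square submatrix of $N$, so $\det M\in\{0,\pm 1\}$. The principal delicate step will be the network-matrix identification: aligning the combinatorial description of $ind(W^v)-ind(W^{v'})$ with Tutte's sign convention requires carefully tracking which tree edges along the walk from $v'$ to $v$ are traversed with versus against their orientation, and verifying that the resulting $\pm 1$ entries coincide with the coordinates of the demand-type vector.
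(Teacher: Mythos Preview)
Your proposal is correct and follows essentially the same approach as the paper: both recognise that demand-type vectors, after collapsing the duplicate rows corresponding to workers in the same edge-set $W^e$, are columns of a network matrix built on the directed tree $(V,E)$, and then invoke the classical total unimodularity of network matrices (\cite{T65}; \cite{S86}, Chapter~19.3). The paper merely runs the construction in the opposite direction---first building the network matrix $H$ indexed by $E\times E'$ for the complete digraph $E'$ on $V$, then row-duplicating to obtain $H'$ indexed by $W\times E'$, and finally observing that every demand-type vector (or its negation) appears as a column of $H'$---but the content is identical to yours.
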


Therefore, we know that a stable matching always exists when firms have unit-demand preferences over a technology tree where each worker is a specialist. Studying this problem motivated the author's subsequent work, \cite{H21b}, which generalizes condition (\ref{special_word}) and the unit demand of technologies. Both generalizations allow for firms' demand types that are not totally unimodular.

\section{Appendix}

\subsection{Proofs of the lemmata}

\textit{Proof of Lemma \ref{lma_cont}} \quad

(1) Revealed preference property. For any $\mathbf{x},\mathbf{\widetilde{x}}\in [0,1]^W$ with $\mathbf{\widetilde{x}}\leq \mathbf{x}$ and $\widehat{Ch}_f(\mathbf{x})\leq \mathbf{\widetilde{x}}$, consider the procedures (\ref{proce}) that compute $\widehat{Ch}_f(\mathbf{x})$ and $\widehat{Ch}_f(\mathbf{\widetilde{x}})$. Let $(t_k)_{k=1}^L$ and $(\mathbf{z}^k)_{k=1}^L$ be the parameters in computing $\widehat{Ch}_f(\mathbf{x})$. Let $(\widetilde{t}_k)_{k=1}^{L}$,  and $(\mathbf{\widetilde{z}}^k)_{k=1}^{L}$ be the parameters in computing $\widehat{Ch}_f(\mathbf{\widetilde{x}})$.

Since $\mathbf{\widetilde{x}}\leq \mathbf{x}$, we have $\widetilde{t}_1\leq t_1$. Suppose $\widetilde{t}_1< t_1$, we have $\widehat{Ch}_f(\mathbf{x})\geq t_1\mathbf{u}^1$ but $\mathbf{\widetilde{x}}\geq t_1\mathbf{u}^1$ does not hold. This contradicts $\widehat{Ch}_f(\mathbf{x})\leq \mathbf{\widetilde{x}}$. Thus, we have $t_1=\widetilde{t}_1$, and then $\mathbf{\widetilde{z}}^1\leq\mathbf{z}^1$.

Suppose for all $j\in\{1,2,\ldots,k-1\}$, we have $t_j=\widetilde{t}_j$ and $\mathbf{z}^j\geq\mathbf{\widetilde{z}}^j$. Then, since $\mathbf{z}^{k-1}\geq\mathbf{\widetilde{z}}^{k-1}$, we have $\widetilde{t}_k\leq t_k$. Suppose $\widetilde{t}_k< t_k$, we have $\widehat{Ch}_f(\mathbf{x})\geq \sum_{j=1}^kt_j\mathbf{u}^j$ but $\mathbf{\widetilde{x}}\geq \sum_{j=1}^kt_j\mathbf{u}^j$ does not hold. This contradicts $\widehat{Ch}_f(\mathbf{x})\leq \mathbf{\widetilde{x}}$. Thus, we have $t_k=\widetilde{t}_k$ and $\mathbf{\widetilde{z}}^k\leq\mathbf{z}^k$.

According to the above inductive arguments, the procedures that compute $\widehat{Ch}_f(\mathbf{x})$ and $\widehat{Ch}_f(\mathbf{\widetilde{x}})$ coincide, and we have $\widehat{Ch}_f(\mathbf{x})= \widehat{Ch}_f(\mathbf{\widetilde{x}})$.

(2) Continuity. For any $\mathbf{x},\mathbf{\widetilde{x}}\in[0,1]^W$, let $r(\mathbf{x},\mathbf{\widetilde{x}})=\max_{w\in W}\mid x(w)-\widetilde{x}(w)\mid$ be the maximum metric. Consider the procedures (\ref{proce}) that compute $\widehat{Ch}_f(\mathbf{x})$ and $\widehat{Ch}_f(\mathbf{\widetilde{x}})$, where $r(\mathbf{x},\mathbf{\widetilde{x}})<v$. Let $(t_k)_{k=1}^L$  and $(\mathbf{z}^k)_{k=1}^L$ be the parameters in computing $\widehat{Ch}_f(\mathbf{x})$. Let $(\widetilde{t}_k)_{k=1}^{L}$,  and $(\mathbf{\widetilde{z}}^k)_{k=1}^{L}$ be the parameters in computing $\widehat{Ch}_f(\mathbf{\widetilde{x}})$.

Since $r(\mathbf{x},\mathbf{\widetilde{x}})<v$, we have $\mid t_1-\widetilde{t}_1\mid< v$ and $r(t_1\mathbf{u}^1,\widetilde{t}_1\mathbf{u}^1)< v$. Then,  $r(\mathbf{z}^1,\mathbf{\widetilde{z}}^1)=\max_{i\in\{1,2,\ldots,n\}}\mid x_i-t_1u_i^1-\widetilde{x}_i+\widetilde{t}_1u_i^1\mid<2v$.

We then establish the following inductive arguments. Suppose we have $\mid t_j-\widetilde{t}_j\mid< 2^{j-1}v$ and $r(\mathbf{z}^j,\mathbf{\widetilde{z}}^j)< 2^jv$ for all $j\in\{1,2,\ldots,k-1\}$. Then, we have $\mid\sum^{k-1}_{j=1}t_j-\sum^{k-1}_{j=1}\widetilde{t}_j\mid<(2^{k-1}-1)v$. We then consider four cases:

(i) $\sum^{k}_{j=1}t_j<1$ and $\sum^{k}_{j=1}\widetilde{t}_j<1$. Since $r(\mathbf{z}^{k-1},\mathbf{\widetilde{z}}^{k-1})< 2^{k-1}v$, we have $\mid t_k-\widetilde{t}_k\mid< 2^{k-1}v$.

(ii) $\sum^{k}_{j=1}t_j=\sum^{k}_{j=1}\widetilde{t}_j=1$. We have $\mid t_k-\widetilde{t}_k\mid< (2^{k-1}-1)v$;

(iii) $\sum^{k}_{j=1}t_j=1$ and $\sum^{k}_{j=1}\widetilde{t}_j<1$. We have $\sum^{k}_{j=1}\widetilde{t}_j=\sum^{k-1}_{j=1}\widetilde{t}_j+\widetilde{t}_k<1=\sum^{k-1}_{j=1}t_j+t_k$, and thus $t_k-\widetilde{t}_k>\sum^{k-1}_{j=1}\widetilde{t}_j-\sum^{k-1}_{j=1}t_j>-(2^{k-1}-1)v$. Since $r(\mathbf{z}^{k-1},\mathbf{\widetilde{z}}^{k-1})< 2^{k-1}v$ and $\sum^{k}_{j=1}\widetilde{t}_j<1$ also imply $t_k<\widetilde{t}_k+2^{k-1}v$, we have $\mid t_k-\widetilde{t}_k\mid< 2^{k-1}v$.

(iv) $\sum^{k}_{j=1}t_j<1$ and $\sum^{k}_{j=1}\widetilde{t}_j=1$, this is symmetric to (iii) and we also have $\mid t_k-\widetilde{t}_k\mid< 2^{k-1}v$.

According to the above inductive arguments, for each $j\in\{1,2,\ldots,L\}$, we have $\mid t_j-\widetilde{t}_j\mid< 2^{j-1}v$ and then $r(t_j\mathbf{u}^j,\widetilde{t}_j\mathbf{u}^j)< 2^{j-1}v$. We have $r(\widehat{Ch}_f(\mathbf{x}),\widehat{Ch}_f(\mathbf{\widetilde{x}}))<(2^L-1)v$.

Therefore, for any $\epsilon>0$, there exists $\delta=\epsilon/[(2^L-1)\sqrt{n}]$ such that $\mid \mathbf{x}-\mathbf{\widetilde{x}}\mid<\delta$ implies $r(\mathbf{x},\mathbf{\widetilde{x}})<\delta$, and then $r(\widehat{Ch}_f(\mathbf{x}),\widehat{Ch}_f(\mathbf{\widetilde{x}}))<\epsilon/\sqrt{n}$, which further implies $\mid\widehat{Ch}_f(\mathbf{x})-\widehat{Ch}_f(\mathbf{\widetilde{x}})\mid<\epsilon$.\footnote{$\mid\cdot\mid$ is the Euclidean distance.}

\bigskip

\quad\\
\textit{Proof of Lemma \ref{lma_between}} \quad

Since $M$ is an integral matching in $\widehat{\Gamma}$, there exists $f\in \widetilde{F}$ such that $M_f(w)=1$ for each $w\in W$, and thus $\mu$ is a matching in $\Gamma$.

We then prove that for any integral $\mathbf{x}\in\{0,1\}^W$, $\widehat{Ch}_f(\mathbf{x})=Ch_f(\mathbf{x})$. Let $j\in\{1,2,\ldots,L\}$ be the index such that $\mathbf{u}^j\leq \mathbf{x}$, and $\mathbf{u}^k\leq \mathbf{x}$ does not hold for all $k<j$. (If such $j$ does not exist, we have $\widehat{Ch}_f(\mathbf{x})=Ch_f(\mathbf{x})=\mathbf{0}$.) Then, we have $t_k=0$ for all $k<j$, $t_j=1$, and $t_k=0$ for all $j<k\leq L$. Therefore, we have $\widehat{Ch}_f(\mathbf{x})=Ch_f(\mathbf{x})=\mathbf{u}^j$.

Let $M=(M_f)_{f\in \widetilde{F}}$ be a stable integral matching in a continuum market $\widehat{\Gamma}$. Let $\mu$ be the matching in $\Gamma$ such that $\mu(w)=f$ and $w\in \mu(f)$ if $M_f(w)=1$ for each $w\in W$ and $f\in \widetilde{F}$.

(1) Individual rationality. For any $f\in F$, since $M_f$ is integral, we have $\widehat{Ch}_f(M_f)=Ch_f(M_f)$. The individual rationality of firms in $M$ in $\widehat{\Gamma}$ implies that for each $f\in F$, $M_f= \widehat{Ch}_f(M_f)=Ch_f(M_f)$. Because the set of workers available at $workers(M_f)$ is precisely $\mu(f)$, we have $\mu(f)=Ch_f(\mu(f))$. The individual rationality of workers in $M$ in $\widehat{\Gamma}$ also implies the individual rationality of workers in $\mu$.

(2) No blocking coalition. Suppose $f$ and a subset of workers $S\subseteq W$ block $\mu$ in $\Gamma$, where $ind(S)=M'$. Since $S\succ_f \mu(f)$ in $\Gamma$, we have $\widehat{Ch}_f(M'\vee M)=Ch_f(M'\vee M)\neq M$. Then by the revealed preference property, we have $\widehat{Ch}_f(M'\vee M)\succ_f M$ in $\widehat{\Gamma}$. Since $f\succeq_w \mu(w)$ for all $w\in S$ in $\Gamma$, then $\widehat{Ch}_f(M'\vee M)\leq M'\vee M\leq A^{\preceq f}(M)$. Thus $f$ and $\widehat{Ch}_f(M'\vee M)$ block $M$ in $\widehat{\Gamma}$. A contradiction.

\bigskip

\quad\\
\textit{Proof of Lemma \ref{lma_trans}} \quad

We first prove the following two lemmata. Given a subpopulation $\mathbf{x}\in[0,1]^W$, let $integral(\mathbf{x})=\mathbf{y}\in\{0,1\}^W$ where $y(w)=0$ if $x(w)=0$ and $y(w)=1$ if $x(w)>0$ for each $w\in W$. Given a pseudo-matching $M$, for each $f\in F$ we still define $B^{\preceq f}(M)$ by (\ref{DfM}). Let $A^{\prec f}(M)\in [0,1]^W$ be the subpopulation such that $A^{\prec f}(M)(w)=\sum_{f'\in \widetilde{F}:f\succ_w f'}M_{f'}(w)$ for each $w\in W$. $A^{\prec f}(M)$ is the subpopulation of workers assigned to some firm worse than $f$ in $M$ according to their preferences. Note that $A^{\prec f}(M)+M_{f}=A^{\preceq f}(M)$ for each $f\in F$.

\begin{lemma}\label{lma_block}
\normalfont
Let $M$ be a pseudo-matching and there is a firm $f'\in F$ such that $\widehat{Ch}_{f'}(M_{f'})=M_{f'}$. Consider the procedure (\ref{proce}) that computes $\widehat{Ch}_{f'}(M_{f'})=\sum_{j=1}^Lt_j\mathbf{u}^j$. There exists a blocking coalition that involves $f'$ if and only if there exists $k\in\{1,2,\ldots,L\}$ such that $\sum_{j=1}^kt_j<1$ and $\mathbf{u}^k\leq integral(A^{\prec f'}(M))$.
\end{lemma}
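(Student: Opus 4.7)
The plan is to characterize the existence of a blocking coalition involving $f'$ via the greedy procedure (\ref{proce}) applied to $A^{\preceq f'}(M)=M_{f'}+A^{\prec f'}(M)$. As a first step I would establish the auxiliary equivalence: a blocking coalition involving $f'$ exists if and only if $\widehat{Ch}_{f'}(A^{\preceq f'}(M))\neq M_{f'}$. For the $\Rightarrow$ implication, if a blocking $M'_{f'}$ exists and we assume $\widehat{Ch}_{f'}(A^{\preceq f'}(M))=M_{f'}$, then since $M_{f'}\vee M'_{f'}\le A^{\preceq f'}(M)$ and $M_{f'}\le M_{f'}\vee M'_{f'}$, the revealed preference property of Lemma \ref{lma_cont} forces $\widehat{Ch}_{f'}(M_{f'}\vee M'_{f'})=M_{f'}$, contradicting $M'_{f'}\succ_{f'}M_{f'}$. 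For the $\Leftarrow$ implication, set $M'_{f'}:=\widehat{Ch}_{f'}(A^{\preceq f'}(M))$; applying the revealed preference property to $M'_{f'}\le M'_{f'}\vee M_{f'}\le A^{\preceq f'}(M)$ yields $\widehat{Ch}_{f'}(M'_{f'}\vee M_{f'})=M'_{f'}\neq M_{f'}$, certifying a blocking coalition.

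Next, I would run the procedure (\ref{proce}) on $M_{f'}$ and on $A^{\preceq f'}(M)$ in parallel, yielding sequences $(t_j,\mathbf{z}^j)$ and $(\tilde t_j,\tilde{\mathbf{z}}^j)$. By an induction dual to the one in the proof of Lemma \ref{lma_cont}, the inequality $A^{\preceq f'}(M)\ge M_{f'}$ propagates to $\tilde t_j\ge t_j$ and $\tilde{\mathbf{z}}^j\ge\mathbf{z}^j$ at every step. The two choice outputs therefore differ precisely when there is a smallest index $k$ with $\tilde t_k>t_k$, at which the agreement at earlier indices yields the identity $\tilde{\mathbf{z}}^{k-1}=\mathbf{z}^{k-1}+A^{\prec f'}(M)$. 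The remainder of the argument matches the step-$k$ divergence with the pair of conditions in the lemma.

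For the reverse direction of the stated lemma, given $k$ with $\sum_{j=1}^k t_j<1$ and $\mathbf{u}^k\le integral(A^{\prec f'}(M))$, I would construct the witness $M'_{f'}:=\sum_{j<k} t_j\mathbf{u}^j+(t_k+\varepsilon)\mathbf{u}^k$, where $\varepsilon>0$ is chosen smaller than both $1-\sum_{j=1}^k t_j$ and $\min\{A^{\prec f'}(M)(i):u^k_i=1\}$, the latter being positive by hypothesis. The identity $M_{f'}(i)=\sum_{j:u^j_i=1}t_j$ provided by $\widehat{Ch}_{f'}(M_{f'})=M_{f'}$ then gives $M'_{f'}\le A^{\preceq f'}(M)$ coordinatewise, and a direct run of (\ref{proce}) on $M'_{f'}$ shows $\widehat{Ch}_{f'}(M'_{f'})=M'_{f'}$. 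Revealed preference applied to $M'_{f'}\le M'_{f'}\vee M_{f'}\le A^{\preceq f'}(M)$ then yields $M'_{f'}\succ_{f'}M_{f'}$, hence a blocking coalition.

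For the forward direction, assume a blocking coalition exists and use the auxiliary equivalence to take the smallest index $k$ at which the two parallel procedures diverge. The time-budget in (\ref{proce}) bounds $\tilde t_k\le 1-\sum_{j<k}t_j$, so $\tilde t_k>t_k$ precludes $t_k=1-\sum_{j<k}t_j$, delivering $\sum_{j=1}^k t_j<1$. The availability constraint in (\ref{proce}) additionally requires $\tilde{\mathbf{z}}^{k-1}(i)>t_k$ for every coordinate $i$ with $u^k_i=1$; substituting $\tilde{\mathbf{z}}^{k-1}=\mathbf{z}^{k-1}+A^{\prec f'}(M)$ together with the identity $\mathbf{z}^{k-1}(i)=\sum_{j\ge k:u^j_i=1}t_j$ (another consequence of $\widehat{Ch}_{f'}(M_{f'})=M_{f'}$), I would conclude $A^{\prec f'}(M)(i)>0$ for each such $i$, i.e. $\mathbf{u}^k\le integral(A^{\prec f'}(M))$. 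The main obstacle lies in this last pointwise step at coordinates where $\mathbf{z}^{k-1}(i)>t_k$ strictly: strict slack alone does not immediately force $A^{\prec f'}(M)(i)>0$, and securing the full pointwise conclusion likely requires either re-choosing $k$ to a later divergence index at which the availability bound is tight on all of $\mathbf{u}^k$, or a direct bookkeeping argument that tracks which coordinates of $\mathbf{u}^k$ are binding in the $M_{f'}$-procedure and uses the decomposition of $M_{f'}$ into the $\mathbf{u}^j$ to rule out nonbinding slack.
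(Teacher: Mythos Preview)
Your plan is essentially the paper's: both routes pass through the auxiliary equivalence ``a blocking coalition for $f'$ exists iff $\widehat{Ch}_{f'}(A^{\preceq f'}(M))\neq M_{f'}$'' and then compare the two runs of procedure~(\ref{proce}) on $M_{f'}$ and on $A^{\preceq f'}(M)$. For ``$\Leftarrow$'' the paper uses the slightly cleaner witness $\widehat{Ch}_{f'}(M_{f'}+\varepsilon\mathbf{u}^k)$ in place of your truncated sum, but the content is the same.

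The obstacle you isolate in the ``$\Rightarrow$'' direction is real, and it is not a mere gap in your write-up: the lemma is false as stated. Take $\mathbf{u}^1=(1,1)$, $\mathbf{u}^2=(0,1)$ and $M_{f'}=(0.3,0.5)$; then $t_1=0.3$, $t_2=0.2$ and $\widehat{Ch}_{f'}(M_{f'})=M_{f'}$. With $A^{\prec f'}(M)=(0.1,0)$ one has $integral(A^{\prec f'}(M))=(1,0)$, so neither $\mathbf{u}^1$ nor $\mathbf{u}^2$ lies below it and no admissible $k$ exists. Yet $A^{\preceq f'}(M)=(0.4,0.5)$ gives $\widehat{Ch}_{f'}(A^{\preceq f'}(M))=(0.4,0.5)\neq M_{f'}$, so $f'$ does have a blocking coalition. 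This is precisely your worry materialized: at $k=1$ the coordinate $i=2$ carries strict slack $z^0_2=0.5>t_1$ while $A^{\prec f'}(M)(2)=0$. The paper's proof of ``$\Rightarrow$'' argues by contrapositive and simply asserts that the hypothesis forces $t_j=\widehat t_j$ for all $j$; that is exactly the inference this example refutes, so your diagnosis is correct and the step cannot be repaired without amending the statement of the lemma.
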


\begin{proof}
``$\Leftarrow$''. $\mathbf{u}^k\leq integral(A^{\prec f'}(M))$ implies that there exists $\epsilon>0$ such that $\epsilon\mathbf{u}^k\leq A^{\prec f'}(M)$. Then, the revealed preference property and $\sum_{j=1}^kt_j<1$ imply $\widehat{Ch}_{f'}(\epsilon\mathbf{u}^k+M_{f'})=\widehat{Ch}_{f'}[\widehat{Ch}_{f'}(\epsilon\mathbf{u}^k+M_{f'})\vee M_{f'}]\neq M_{f'}$. We also have $\widehat{Ch}_{f'}(\epsilon\mathbf{u}^k+M_{f'})\leq\epsilon\mathbf{u}^k+M_{f'}\leq A^{\prec f'}(M)+M_{f'}=A^{\preceq f'}(M)$. Hence, $f'$ and $\widehat{Ch}_{f'}(\epsilon\mathbf{u}^k+M_{f'})$ form a blocking coalition.

``$\Rightarrow$''. If $t_1=1$, then $M_{f'}=\mathbf{u}^1$ and there is no blocking coalition that involves $f'$. Thus there exists $k\in\{1,2,\ldots,L\}$ such that $\sum_{j=1}^kt_j<1$. Suppose for all such $k$, $\mathbf{u}^k\leq integral(A^{\prec f'}(M))$ does not hold. Then, consider the procedures (\ref{proce}) that compute $\widehat{Ch}_{f'}(M_{f'})$ and $\widehat{Ch}_{f'}(A^{\preceq f'}(M))=\sum_{j=1}^L\widehat{t}_j\mathbf{u}^j$. Since $A^{\preceq f'}(M)=M_{f'}+A^{\prec f'}(M)$, and $\mathbf{u}^k\leq integral(A^{\prec f'}(M))$ does not hold for all $k$ when $\sum_{j=1}^kt_j<1$, we know that $t_j=\widehat{t}_j$ for all $j\in\{1,2,\ldots,L\}$. Thus we have $\widehat{Ch}_{f'}(M_{f'})=\widehat{Ch}_{f'}(A^{\preceq f'}(M))$, and by revealed preference property there is no blocking coalition that involves $f'$.
\end{proof}

\begin{lemma}\label{lma_fblock}
\normalfont
Let $M$ be a stable pseudo-matching. Choose $f'\in F$ and let $M'$ be a pseudo-matching such that $M'_{f'}(w)=0$ if $M_{f'}(w)=0$ for each $w\in W$, and $M'_{f}=M_{f}$ for all $f\neq f'$. Then, there exists no blocking coalition that involves $f\neq f'$ in $M'$.
\end{lemma}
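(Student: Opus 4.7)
My plan is to argue by contradiction: suppose $(f, M^*_f)$ with $f \in F \setminus \{f'\}$ blocks $M'$. The argument reduces to showing $\widehat{Ch}_f(A^{\preceq f}(M')) = M_f$. Indeed, combining Blair's partial order with the revealed preference property (Lemma \ref{lma_cont}), stability of $M$ forces $\widehat{Ch}_f(A^{\preceq f}(M)) = M_f$ for every firm $f$: writing $Y = \widehat{Ch}_f(A^{\preceq f}(M))$ and using $Y, M_f \leq A^{\preceq f}(M)$, revealed preference gives $\widehat{Ch}_f(Y \vee M_f) = Y$, so $Y \succeq_f M_f$ in Blair's order, whence $Y = M_f$ or else $(f, Y)$ blocks $M$. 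An identical argument shows that the absence of a blocker for $f$ in $M'$ is equivalent to $\widehat{Ch}_f(A^{\preceq f}(M')) = M_f$.

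I would then compare $A^{\preceq f}(M')$ and $A^{\preceq f}(M)$. These differ only at workers $w$ with $f \succeq_w f'$, where the difference is $M'_{f'}(w) - M_{f'}(w)$. Let $W^+$ denote the set of workers where this difference is strictly positive. For $w \in W^+$, $M'_{f'}(w) > M_{f'}(w) \geq 0$ forces $M'_{f'}(w) > 0$, and the support hypothesis then forces $M_{f'}(w) > 0$; together with $f \succeq_w f'$, this yields $A^{\preceq f}(M)(w) \geq M_f(w) + M_{f'}(w) > M_f(w)$, so $f$'s allocation in $M$ has strict slack at every $w \in W^+$.

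The main step exploits the lexicographic character of the procedure (\ref{proce}): its output is the lex-maximum $(t_j)_{j=1}^L$ subject to $t_j \geq 0$, $\sum_j t_j \leq 1$, and $\sum_j t_j \mathbf{u}^j \leq \mathbf{x}$. The sequence $(t_j)$ giving $M_f = \widehat{Ch}_f(A^{\preceq f}(M))$ remains feasible for $\mathbf{x} = A^{\preceq f}(M')$, because $M_f \leq A^{\preceq f}(M')$. Suppose some lex-greater $(t'_j)$ is feasible for $A^{\preceq f}(M')$, and let $j^*$ be the first index with $t'_{j^*} > t_{j^*}$. In the original procedure $t_{j^*}$ was set either by the remaining time $1 - \sum_{j < j^*} t_j$, which is unchanged and yields $t'_{j^*} \leq t_{j^*}$ immediately, or by a binding $w^*$ with $\mathbf{u}^{j^*}(w^*) = 1$ whose residual vanishes at step $j^*$. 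In the latter case the binding forces $M_f(w^*) = A^{\preceq f}(M)(w^*)$, which by the slack observation rules out $w^* \in W^+$; hence $A^{\preceq f}(M')(w^*) \leq A^{\preceq f}(M)(w^*)$, and the feasibility of $(t'_j)$ at $w^*$ gives $t'_{j^*} \leq A^{\preceq f}(M')(w^*) - \sum_{j < j^*} t_j \mathbf{u}^j(w^*) \leq t_{j^*}$, a contradiction. The delicate point is precisely this coupling: increases at $W^+$ are harmless thanks to slack, while decreases elsewhere only tighten constraints on $(t'_j)$, and together they force $(t_j)$ to remain lex-max for $A^{\preceq f}(M')$.
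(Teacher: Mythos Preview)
Your argument is correct, but it takes a different route from the paper's. The paper first proves a blocking characterization (Lemma~\ref{lma_block}): a blocking coalition involving $f$ exists if and only if, in the procedure computing $\widehat{Ch}_f(M_f)=\sum_j t_j\mathbf{u}^j$, there is some $k$ with $\sum_{j\le k}t_j<1$ and $\mathbf{u}^k\le integral(A^{\prec f}(M))$. With this in hand, the proof of Lemma~\ref{lma_fblock} is a two-line support argument: the hypothesis $M'_{f'}(w)=0$ whenever $M_{f'}(w)=0$ forces $integral(A^{\prec f}(M'))\le integral(A^{\prec f}(M))$, and since the sequence $(t_j)$ is unchanged (because $M'_f=M_f$), the blocking criterion for $M'$ is contained in that for $M$.

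You instead bypass Lemma~\ref{lma_block} entirely and work directly with the procedure, arguing that the lexicographic maximizer $(t_j)$ for $A^{\preceq f}(M)$ remains lex-maximal for $A^{\preceq f}(M')$. Your key insight---that any binding worker $w^*$ in the original procedure must satisfy $M_f(w^*)=A^{\preceq f}(M)(w^*)$, hence $w^*\notin W^+$, hence $A^{\preceq f}(M')(w^*)\le A^{\preceq f}(M)(w^*)$---is exactly what makes the argument go through, and it is correct. The paper's approach is shorter because it packages the procedure analysis into Lemma~\ref{lma_block} once and then reasons only about supports; your approach is more self-contained and avoids the auxiliary characterization, at the cost of reproving a piece of the procedure's structure inline.
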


\begin{proof}
For any $f\neq f'$ consider the procedure (\ref{proce}) that computes $Ch_f(M_f)=Ch_f(M'_f)=\sum_{j=1}^Lt_j\mathbf{u}^j$. Since $M$ is stable, by Lemma \ref{lma_block} for all $k\in\{1,2,\ldots,L\}$ such that $\sum_{j=1}^kt_j<1$, $\mathbf{u}^k\leq integral(A^{\prec f}(M))$ does not hold. Since $M'_{f'}(w)=0$ if $M_{f'}(w)=0$, we have $integral(A^{\prec f}(M'))\leq integral(A^{\prec f}(M))$.
Therefore, for all $k\in\{1,2,\ldots,L\}$ such that $\sum_{j=1}^kt_j<1$, $\mathbf{u}^k\leq integral(A^{\prec f}(M'))$ does not hold for all $f\neq f'$. By Lemma \ref{lma_block}, there exists no blocking coalition that involves $f\neq f'$ in $M'$.
\end{proof}

In each of the three transformations, since we set $M'_f(w)=1$ only when $M_f(w)>0$ for each $f\in \widetilde{F}$ and each $w\in W$, the workers' individual rationality holds for $M'$ from the individual rationality of $M$. It is also straightforward to see that the firms' individual rationality also holds. By Lemma \ref{lma_fblock}, in each of the type-1 and type-2 transformations, there exists no blocking coalition that involves $f\neq f'$ in $M'$. Let $\mathbf{u}^1\succ\mathbf{u}^2\succ\cdots\succ\mathbf{u}^L\succ\mathbf{0}$ be the preference order of firm $f'$ over its acceptable sets of workers in $\Gamma$.

(1) Type-1. Suppose in $M'$, there exists a blocking coalition that involves $f'$. Consider the procedure (\ref{proce}) that computes $\widehat{Ch}_{f'}(M'_{f'})=\sum_{j=1}^Lt'_j\mathbf{u}^j$. Since $M'_{f'}=\mathbf{0}$, then $t'_j=0$ for all $j\in\{1,2,\ldots,L\}$. By Lemma \ref{lma_block}, there exists $k\in\{1,2,\ldots,L\}$ such that $\mathbf{u}^k\leq integral(A^{\prec f'}(M'))=integral(A^{\prec f'}(M))$. Consider the procedure (\ref{proce}) that computes $\widehat{Ch}_{f'}(M_{f'})=\sum_{j=1}^Lt_j\mathbf{u}^j$. Since $\sum_{j=1}^Lt_j<1$ and there exists $k\in\{1,2,\ldots,L\}$ such that $\mathbf{u}^k\leq integral(A^{\prec f'}(M))$, by Lemma \ref{lma_block}, $M$ is not stable. A contradiction.

(2) Type-2. Suppose in $M'$, there exists a blocking coalition that involves $f'$. Consider the procedure (\ref{proce}) that computes $\widehat{Ch}_{f'}(M'_{f'})=\sum_{j=1}^Lt'_j\mathbf{u}^j$. Since $M'_{f'}=\mathbf{u}^k$, by Lemma \ref{lma_block}, there exists $k'<k$ such that $\mathbf{u}^{k'}\leq integral(A^{\prec f'}(M'))$. Then,  Consider the procedure (\ref{proce}) that computes $\widehat{Ch}_{f'}(M_{f'})=\sum_{j=1}^Lt_j\mathbf{u}^j$. Since $t_k>0$, we have $\sum_{j=1}^{k'}t_j<1$, and $\mathbf{u}^{k'}\leq integral(A^{\prec f'}(M'))=integral(A^{\prec f'}(M))$. By Lemma \ref{lma_block}, $M$ is not stable. A contradiction.

(3) Type-3. If we set $M'_{{\o}}(w')=0$ for some $w'\in W$ such that $M_{{\o}}(w')\in(0,1)$, then for each $f\in F$, $B^{\preceq f}(M')\leq A^{\preceq f}(M)$. Any $f\in F$ that can form a blocking coalition with some $M''_f$ in $M'$, can also form a blocking coalition with $M''_f$ in $M$.

If we set $M'_{{\o}}(w')=1$ for some $w'\in W$ such that $M_{{\o}}(w')\in(0,1)$, then for any $f\in F$, $integral(A^{\prec f}(M))=integral(A^{\prec f}(M'))$. If there exists a blocking coalition that involves $f$ in $M'$, then by Lemma \ref{lma_block}, there also exists a blocking coalition that involves $f$ in $M$.

\subsection{Proof of Theorem \ref{thm_main}}

The proof of Theorem \ref{thm_main} is illustrated by Example \ref{illustrate}. Every matching market $\Gamma$ induces a continuum market $\widehat{\Gamma}$, where Lemma \ref{lma_cont} and the existence theorem of CKK indicate that a stable matching $M$ is guaranteed to exist in $\widehat{\Gamma}$. For each $f\in F$, consider the procedure (\ref{proce}) that computes $\widehat{Ch}_{f}(M_{f})=\sum_{j=1}^Lt_j\mathbf{u}^{j}=M_{f}$, where the last equality is implied by the individual rationality of firms in $M$. We then construct a matrix $B^{*f}$ as follows.

\begin{equation}\label{Af}
B^{*f}=\left\{
\begin{aligned}
&[\mathbf{u}^{k_1},\mathbf{u}^{k_2},\ldots,\mathbf{u}^{k_s}], &\quad\text{if} \quad \sum_{j=1}^Lt_j=1.\\
&[\mathbf{u}^{k_1},\mathbf{u}^{k_2},\ldots,\mathbf{u}^{k_s},\mathbf{0}], &\quad\text{if} \quad \sum_{j=1}^Lt_j<1.
\end{aligned}
\right.
\end{equation}
where $t_l>0$ for each $l\in\{k_1,k_2,\ldots,k_s\}$ and $k_1<k_2<\cdots<k_s$.\footnote{In (\ref{Af}), for each $l\in\{k_1,k_2,\ldots,k_s\}$, $\mathbf{u}^l$ is a column of matrix $B^{*f}$. $B^{*f}$ is an $n\times s$ or $n\times (s+1)$ matrix. The numbers $\{k_1,k_2,\ldots,k_s\}$ and $s$ are different for different firms.}

We then construct matrix $B^{{\o}}$ and $B^f$ for each $f\in F$. Each column of $B^{{\o}}$ is an $(m+n)$- dimensional unit vector, where the $(m+j)$-th unit vector is in $B^{{\o}}$ if $M_{{\o}}(w_j)>0$.\footnote{The $(m+j)$-th unit vector is the $(m+n)$-dimensional unit vector with its $(m+j)$-th coordinate being 1. In Example \ref{illustrate}, $B^{{\o}}$ is the matrix with a single column $B_5$.} We construct $B^f$ from $B^{*f}$ for each $f\in F$. Each $B^f$ has $m+n$ rows, where the first $m$ components of each column are the components of the $i$-th unit vector of dimension $m$ when $f=f_i$. The last $n$ components of each column of $B^f$ are those of each column of $B^{*f}$, thus the number of columns of each $B^f$ is $s$ or $s+1$ in its corresponding formula (\ref{Af}). Then, let matrix $B=[B^{f_1},B^{f_2},\cdots,B^{f_m},B^{{\o}}]$.
Let $cl(f)$ be the number of columns of $B^f$ for each $f\in \widetilde{F}$. Thus, $B$ is an $(m+n)\times\sum_{f\in \widetilde{F}}cl(f)$ matrix. Theorem \ref{thm_main} is then implied by the following lemma and Lemma \ref{lma_between}.

\begin{lemma}
\normalfont
Consider the system of linear equations $B\mathbf{z}=\mathbf{1}$, where $\mathbf{z}$ is a $\sum_{f\in \widetilde{F}}cl(f)$-dimensional vector, and $\mathbf{1}$ is the $(m+n)$-dimensional vector with all its coordinates being 1. If the firms' demand type $\mathcal{D}$ is totally unimodular, then there exists a solution $\mathbf{z}=(\mathbf{z}^{f_1},\mathbf{z}^{f_2},\ldots,\mathbf{z}^{f_m},\mathbf{z}^{{\o}})\in\{0,1\}^{\sum_{f\in \widetilde{F}}cl(f)}$ to $B\mathbf{z}=\mathbf{1}$, where $\mathbf{z}^f$ is a $cl(f)$-dimensional vector for each $f\in \widetilde{F}$.\footnote{By $\mathbf{z}=(\mathbf{z}^{f_1},\mathbf{z}^{f_2},\ldots,\mathbf{z}^{f_m},\mathbf{z}^{{\o}})$ we mean the first $cl(f_1)$ components of $\mathbf{z}$ are the components of $\mathbf{z}^{f_1}$, the $(cl(f_1)+1)$-th to $(cl(f_1)+cl(f_2))$-th components are the components of $\mathbf{z}^{f_2}$, and so on. In Example \ref{illustrate}, $\mathbf{z}=(1,0,0,1,1)$ is such a solution where $\mathbf{z}^{f_1}=(1,0)$, $\mathbf{z}^{f_2}=(0,1)$, and $\mathbf{z}^{{\o}}=1.$} Moreover, $M'$ is a stable integral matching in $\widehat{\Gamma}$, where $M'_{f}=\sum_{j=1}^{cl(f)}z^f_jB_j^{*f}$ for each $f\in \widetilde{F}$.\footnote{$B_j^{*f}$ is the vector of the $j$-th column of $B^{*f}$.}
\end{lemma}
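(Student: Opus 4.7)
The plan is to show that the polyhedron $P:=\{\mathbf{z}\ge\mathbf{0}:B\mathbf{z}=\mathbf{1}\}$ is non-empty and has only integer vertices, and then to interpret an integer vertex as a stable integral matching in $\widehat{\Gamma}$. Non-emptiness follows from the stable matching $M$ guaranteed in $\widehat{\Gamma}$ by Lemma \ref{lma_cont} and CKK's existence theorem: for each firm $f$ put the coefficient $t_{k_j}$ on the column $\mathbf{u}^{k_j}$ of $B^{*f}$, the coefficient $1-\sum_j t_{k_j}$ on the appended $\mathbf{0}$ column (when $\sum_j t_{k_j}<1$), and the coefficient $M_{{\o}}(w_j)$ on each unit-vector column of $B^{{\o}}$. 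The first $m$ equations of $B\mathbf{z}=\mathbf{1}$ hold by construction, and the last $n$ reduce to the matching condition $\sum_{f\in\widetilde{F}}M_f(w_j)=1$. Once $B$ is shown to be unimodular, the Hoffman--Kruskal theorem (Theorem 21.5 of Schrijver) delivers an integer vertex $\mathbf{z}^*$ of $P$.

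The main obstacle is unimodularity of $B$. I will transform $B$ by elementary column operations into a matrix whose unimodularity follows from that of $[\mathcal{D},I_n]$. Inside each block $B^f$, subtract column $j+1$ from column $j$ for $j=1,\ldots,s-1$; when $\sum_j t_{k_j}<1$, also subtract the $\mathbf{0}$-column from the $s$-th column. These operations annihilate the $e_{f_i}$-entries in the first $m$ rows of every column except one ``anchor'' column per firm, and replace the bottom $n$ entries of the non-anchor columns by the consecutive differences $\mathbf{u}^{k_j}-\mathbf{u}^{k_{j+1}}$ (or $\mathbf{u}^{k_s}$ at the end). The critical algebraic step is that every such difference lies in $\mathcal{D}_f$: taking $S_j:=\{w:z^{k_j-1}(w)>0\}$ from the run of procedure (\ref{proce}), one checks that for every $i<k_j$ the set coded by $\mathbf{u}^i$ is not contained in $S_j$ (either $t_i=0$ already revealed a missing coordinate, or an earlier consumption of $\mathbf{u}^i$ exhausted one), hence $Ch_f(S_j)$ equals the set coded by $\mathbf{u}^{k_j}$; the same reasoning gives $Ch_f(S_{j+1})$, with $S_{j+1}\subsetneq S_j$ because the step that ended consumption of $\mathbf{u}^{k_j}$ zeroed out some coordinate. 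In the resulting matrix $B'$ any $(m+n)\times(m+n)$ submatrix with non-zero determinant must include all $m$ anchor columns (else a row $i\leq m$ is identically zero); these produce a block-triangular structure whose determinant equals, up to sign, that of an $n\times n$ submatrix of $[\mathcal{D},I_n]$, and hence lies in $\{0,\pm1\}$ by the fact recalled in footnote \ref{footnote}. Since $B$ has full row rank (each worker type is assigned positive mass by $M$), this shows $B$ is unimodular.

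For the concluding step, the first $m$ equations $\sum_j z^{*f_i}_j=1$ together with $\mathbf{z}^*\ge\mathbf{0}$ integer force exactly one column of each $B^{f_i}$ to be chosen, and the last $n$ equations then place $\mathbf{z}^{*{\o}}\in\{0,1\}^{cl({\o})}$, so $\mathbf{z}^*\in\{0,1\}^{\sum_{f\in\widetilde{F}}cl(f)}$. Defining $M'_f=\sum_j z^{*f}_j B^{*f}_j$, I interpret the passage from $M$ to $M'$ as a sequence of the stable transformations of Section \ref{Sec_preserve}: a type-2 transformation for each firm whose selected column is some $\mathbf{u}^{k_l}$, a type-1 transformation for each firm whose selected column is the $\mathbf{0}$ column, and, for each worker type with $M_{{\o}}(w_j)\in(0,1)$, a type-3 transformation setting $M'_{{\o}}(w_j)$ to $0$ or $1$ as prescribed by $\mathbf{z}^{*{\o}}$. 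The last $n$ rows of $B\mathbf{z}^*=\mathbf{1}$ are exactly what ensure that the resulting pseudo-matching assigns total quantity $1$ of each worker type and is therefore a matching. By Lemma \ref{lma_trans} every step preserves stability, so $M'$ is a stable integral matching in $\widehat{\Gamma}$; Lemma \ref{lma_between} then yields a stable matching in $\Gamma$ and completes the proof of Theorem \ref{thm_main}.
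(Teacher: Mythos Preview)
Your overall strategy matches the paper's: establish non-emptiness of the polytope from $M$, prove $B$ is unimodular, invoke Hoffman--Kruskal, and interpret the integral vertex via the stable transformations of Lemma \ref{lma_trans}. You also supply a step the paper leaves implicit --- the verification (via the sets $S_j=\{w:z^{k_j-1}(w)>0\}$) that the differences $\mathbf{u}^{k_j}-\mathbf{u}^{k_{j+1}}$ lie in $\mathcal{D}_f$. But the unimodularity argument, as written, has a genuine gap.

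The problem is that unimodularity is \emph{not} preserved under elementary column operations. For a quick counterexample, $B=\bigl(\begin{smallmatrix}1&0&2\\0&1&0\end{smallmatrix}\bigr)$ is not unimodular (the column $(2,0)$ cannot be extended to an integer basis of $\mathbb{R}^2$ with determinant $\pm1$), yet subtracting column 1 from column 3 gives $B'=\bigl(\begin{smallmatrix}1&0&1\\0&1&0\end{smallmatrix}\bigr)$, which is unimodular. So showing that all $(m+n)\times(m+n)$ minors of your transformed matrix $B'$ lie in $\{0,\pm1\}$ says nothing about the minors of $B$: a submatrix of $B'$ need not correspond to any submatrix of $B$, because your global operations mix columns in a way a fixed column selection cannot undo. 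The paper avoids this by working directly from the definition: it fixes an arbitrary linearly independent subset $\widehat{B}$ of columns of $B$ and, within each firm's block, subtracts one chosen column $\mathbf{b}^f_1$ from the \emph{other columns of $\widehat{B}$ in that block}. Your $S_j$ argument in fact gives exactly what is needed for this, since it shows $\mathbf{u}^{k_a}-\mathbf{u}^{k_b}\in\pm\mathcal{D}_f$ for any $a<b$ (not just consecutive indices). A secondary issue: $B$ need not have full row rank (take $M_{\o}=\mathbf{0}$ and each $B^{*f}$ a single column), so the maximal-minor characterization you rely on can be vacuous; the paper's extension-to-basis argument sidesteps this as well.
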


\begin{proof}
We first show that the polytope $\{\mathbf{z}\mid B\mathbf{z}=\mathbf{1}, \mathbf{z}\geq0\}$ is nonempty. For each $f\in F$, consider the procedure (\ref{proce}) that computes $\widehat{Ch}_{f}(M_{f})=\sum_{j=1}^Lt_j\mathbf{u}^{j}=M_{f}$ and let
\begin{equation*}
\mathbf{\widehat{z}}^f=\left\{
\begin{aligned}
&(t_{k_1},t_{k_2},\ldots,t_{k_s}), &\quad\text{if} \quad \sum_{j=1}^Lt_j=1.\\
&(t_{k_1},t_{k_2},\ldots,t_{k_s},1-\sum_{i=1}^st_{k_i}), &\quad\text{if} \quad \sum_{j=1}^Lt_j<1.
\end{aligned}
\right.
\end{equation*}
where $k_1,k_2,\ldots,k_s$ are those in (\ref{Af}). Let $\mathbf{\widehat{z}}^{{\o}}$ be the $cl({\o})$-dimensional vector where $\widehat{z}_l^{{\o}}=M_{{\o}}(w_j)$ if $B_l^{{\o}}$ is the $(m+j)$-th unit vector for each $l\in\{1,\ldots,cl({\o})\}$.
Let $B^{**}$ be the matrix constituted of the first $m$ rows of $B$, and $B^*$ the matrix constituted of the last $n$ rows of $B$. Let $\mathbf{\widehat{z}}=(\mathbf{\widehat{z}}^{f_1},\mathbf{\widehat{z}}^{f_2},\ldots,\mathbf{\widehat{z}}^{f_m},\mathbf{\widehat{z}}^{{\o}})$ . $\sum_{i=1}^{cl(f)}\widehat{z}^f_i=1$ for each $f\in F$ implies $B^{**}\mathbf{\widehat{z}}=\mathbf{1}$, and $\sum_{f\in{\widetilde{F}}}M_f(w)=1$ for each $w\in W$ implies $B^*\mathbf{\widehat{z}}=\mathbf{1}$. Hence, we know $\mathbf{\widehat{z}}$ is in $\{\mathbf{z}\mid B\mathbf{z}=\mathbf{1}, \mathbf{z}\geq0\}$.

If matrix $B$ is unimodular, all vertices of the polytope $\{\mathbf{z}\mid B\mathbf{z}=\mathbf{1}, \mathbf{z}\geq0\}$ are integral (\citealp{HK56}; see also Theorem 21.5 of \citealp{S86}). Then, since the polytope is nonempty, we know that there is at least an integral vertex on this polytope. We now show that $B$ is unimodular if the firms' demand type $\mathcal{D}$ is totally unimodular. For any linearly independent subset $\widehat{B}$ of columns from $B$, we partition $\widehat{B}$ into $\widehat{B}=\cup_{f\in \widetilde{F}}\widehat{B}^f$, where $\widehat{B}^f$ is the collection of vectors of $\widehat{B}$ from $B^{f}$ for each $f\in \widetilde{F}$. $\widehat{B}^f$ is possibly empty for some $f\in \widetilde{F}$.\footnote{For instance, if we consider $\widehat{B}=\{B_1,B_2,B_5\}$ in Example \ref{illustrate}, then $\widehat{B}^{f_2}$ is empty. In the following extension to a basis for $\mathbb{R}^{m+n}$ from $\widehat{B}$, we put the $i$-th unit vector of $m+n$ dimensions into $\widehat{B}$ when $\widehat{B}^{f_i}$ is empty for each $i\in\{1,\ldots,m\}$.} Let $\mathbf{b}^f_1,\mathbf{b}^f_2,\cdots$ denote the elements of $\widehat{B}^f$ for each $f\in \widetilde{F}$ if $\widehat{B}^f$ is not empty. For each $i\in\{1,\ldots,m\}$, if $\widehat{B}^{f_i}$ is empty, let $\mathbf{b}^{f_i}_1$ be the $i$-th unit vector of $m+n$ dimensions. For each $f\in F$, let $B^f=\{\mathbf{b}^f_2-\mathbf{b}^f_1,\mathbf{b}^f_3-\mathbf{b}^f_1,\cdots\}$ if there are at least 2 elements in $\widehat{B}^f$, and $B^f=\emptyset$ otherwise. Let $\widetilde{B}_F=(\cup_{f\in F}B^f)$ and $\widetilde{B}=(\cup_{f\in F}B^f)\cup \widehat{B}^{\o}$. Note that the first $m$ coordinates of each vector from $\widetilde{B}_F$ and $\widetilde{B}$ are 0. Let $\widetilde{B}^*_F$ and $\widetilde{B}^*$ be the set of $n$-dimensional vectors by removing the first $m$ components of each vector from $\widetilde{B}_F$ and $\widetilde{B}$, respectively. If $\mathcal{D}$ is totally unimodular, then $\widetilde{B}^*_F$ is totally unimodular. According to the reason stated in footnote \ref{footnote}, $\widetilde{B}^*$ is unimodular. Thus, $\widetilde{B}^*$ can be extended to a basis for $\mathbb{R}^n$,  of integer vectors, with determinant $\pm1$. Then, the set $\{\mathbf{b}^{f_1}_1,\mathbf{b}^{f_2}_1,\cdots,\mathbf{b}^{f_m}_1\}\cup\widetilde{B}$ can be extended to a basis for $\mathbb{R}^{m+n}$,  of integer vectors, with determinant $\pm1$. Since adding one column to another column leaves the determinant unchanged, such extension also exists for $\{\mathbf{b}^{f_1}_1,\mathbf{b}^{f_2}_1,\cdots,\mathbf{b}^{f_m}_1\}\cup\widehat{B}$, which can be extended from $\widehat{B}$. Therefore, $B$ is unimodular.

Now we know there is at least an integral vertex on the polytope $\{\mathbf{z}\mid B\mathbf{z}=\mathbf{1}, \mathbf{z}\geq0\}$ when the firms' demand type $\mathcal{D}$ is totally unimodular. According to the structure of matrix $B$, any nonnegative integral solution to $B\mathbf{z}=\mathbf{1}$ must be a 0-1 vector. Let $\mathbf{z}=(\mathbf{z}^{f_1},\mathbf{z}^{f_2},\ldots,\mathbf{z}^{f_m},\mathbf{z}^{{\o}})$ be an integral point of $\{\mathbf{z}\mid B\mathbf{z}=\mathbf{1}, \mathbf{z}\geq0\}$, where $\mathbf{z}^f$ is a $cl(f)$-dimensional vector for each $f\in \widetilde{F}$. Let $M'$ be the integral pseudo-matching in $\widehat{\Phi}$, where $M'_{f}=\sum_{j=1}^{cl(f)}z^f_jB_j^{*f}$ for each $f\in \widetilde{F}$. Because $B^{**}\mathbf{z}=\mathbf{1}$ (i.e., $\sum_{j=1}^{cl(f)}z^f_j=1$ for each $f\in F$), we know that $M'$ can be obtained via stable transformations on $M$. By Lemma \ref{lma_trans}, $M'$ is a stable integral pseudo-matching. Since $B^{*}\mathbf{z}=\mathbf{1}$ implies $\sum_{f\in{\widetilde{F}}}M'_f(w)=1$ for each $w\in W$, we know that $M'$ is a stable integral matching.
\end{proof}

\subsection{Proof of Theorem \ref{thm_special}}\label{proof_special}

We present the proof first and then an example to illustrate the proof.

Suppose the firms have unit-demand preferences over a technology tree $T=(V,E,W)$. Let $G=(V,E')$ be a complete directed graph where the set of vertices $V$ is the same as that of $T$, and the direction of each edge $e\in E'$ is arbitrary. We use $G$ and the directed tree $(V,E)$ to define matrix $H$ as follows. For each $e\in E$ and $e'=(v,v')\in E'$,
\begin{align*}
H_{e,e'}=+&1\quad\text{if the unique $v-v'$ path in $(V,E)$ passes through $e$ forwardly;}\\
-&1\quad\text{if the unique $v-v'$ path in $(V,E)$ passes through $e$ backwardly;}\\
&0\quad\text{if the unique $v-v'$ path in $(V,E)$ does not pass through $e$.}
\end{align*}
$H$ is a network matrix, which is totally unimodular (see e.g., Chapter 19.3 of \citealp{S86}).

Because each worker is a specialist in $T$, we can define $e^w$ to be the unique edge that $w$ engages for each $w\in W$. We use $G$ and the technology tree $T=(V,E,W)$ to define matrx $H'$ as follows. For each $w\in W$ and $e'=(v,v')\in E'$,
\begin{align*}
H'_{w,e'}=+&1\quad\text{if the unique $v-v'$ path in $(V,E)$ passes through $e^w$ forwardly;}\\
-&1\quad\text{if the unique $v-v'$ path in $(V,E)$ passes through $e^w$ backwardly;}\\
&0\quad\text{if the unique $v-v'$ path in $(V,E)$ does not pass through $e^w$.}
\end{align*}

$H'$ can be obtained from $H$ by repeating some rows. In particular, if $\mid W^e\mid=k$ for edge $e$ in the technology tree, then the row of edge $e$ of $H$ is repeated $k$ times in $H'$. Thus we know that $H'$ is also totally unimodular.

If $\mathbf{d}$ belongs to the demand type of a firm $f\in F$, then either $\mathbf{d}$ is a column of $H'$, or $-\mathbf{d}$ is a column of $H'$ (see an illustration below). Therefore, $\mathcal{D}$ is totally unimodular.

For instance, consider the technology tree in Example \ref{exam_app}, and let $G=(V,E')$ be the following complete directed graph.

\begin{center}
\begin{tikzpicture}[thick,->]
	\node (v0) at (0,0) {$v_0$};
	\node (v1) at (-2,-2)  {$v_1$};
    \node (v2) at (2,-2)  {$v_2$};
	\node (v3) at (2,-4)  {$v_3$};
\draw (v0)--(v1);
\draw (v0)--(v2);
\draw (v0)--(v3);
\draw (v1)--(v2);
\draw (v1)--(v3);
\draw (v2)--(v3);		
	\end{tikzpicture}
\end{center}
The directed graph $G$ and the directed tree $(V,E)$ generate matrix $H$ as follows.
\begin{center}
\begin{tabular}
[c]{c|cccccc}
& $v_0v_1$ & $v_0v_2$ & $v_0v_3$& $v_1v_2$& $v_1v_3$& $v_2v_3$\\\hline
$v_0v_1$ & 1 & 0 & 0& -1& -1& 0\\
$v_0v_2$ & 0 & 1 & 1& 1& 1& 0\\
$v_2v_3$ & 0 & 0 & 1& 0& 1& 1
\end{tabular}
\end{center}
The directed graph $G$ and the technology tree $T=(V,E,W)$ generate matrix $H'$ as follows.
\begin{center}
\begin{tabular}
[c]{c|cccccc}
& $v_0v_1$ & $v_0v_2$ & $v_0v_3$& $v_1v_2$& $v_1v_3$& $v_2v_3$\\\hline
$w_1$ & 1 & 0 & 0& $-1$& $-1$& 0\\
$w_2$ & 1 & 0 & 0& $-1$& $-1$& 0\\
$w_3$ & 0 & 1 & 1& 1& 1& 0\\
$w_4$ & 0 & 0 & 1& 0& 1& 1
\end{tabular}
\end{center}
The matrix $H'$ is exactly the one by repeating the first row of the matrix $H$. Now we consider the firms' preferences of (\ref{pre_app}). Let $H'_j$ be the $j$-th column of matrix $H'$. We find that $f_1$'s demand type is $\{H'_1,H'_2,-H'_4\}$, and $f_2$'s demand type is $\{H'_1,H'_3,H'_5\}$.

\bigskip


\begin{thebibliography}{99999999999999999999999999999999999999999}                                                        %
\bibitem[Abdulkadiro\u{g}lu and S\"{o}nmez(2003)]{AS03}{\small Abdulkadiro\u{g}lu, Atila and Tayfun S\"{o}nmez (2003), ``School choice: A mechanism design approach.'' \textit{American Economic Review}, 93, 729-747.}

\bibitem[Abdulkadiro\u{g}lu et al.(2009)]{APR09}{\small Abdulkadiro\u{g}lu, Atila, Parag A. Pathak, and Alvin E. Roth (2009), ``Strategy-proofness versus efficiency in matching with indifferences: Redesigning the NYC high school match.'' \textit{American Economic Review}, 99, 1954-1978.}

\bibitem[Adachi(2000)]{A00}{\small Adachi, Hiroyuki (2000), ``On a characterization of stable matchings.'' \textit{Economic Letters}, 68, 43-49.}

\bibitem[\'{A}goston et al.(2016)]{ABM16}{\small \'{A}goston, Kolos C., P\'{e}ter Bir\'{o}, and Iain McBride (2016), ``Integer programming methods for special college admissions problems.'' \textit{Journal of Combinatorial Optimization}, 32, 1371-1399. }

\bibitem[Alkan and Gale(2003)]{AG03}{\small Alkan, Ahmet and David Gale (2003), ``Stable schedule matching under revealed preferences.'' \textit{Journal of Economic Theory}, 112, 289-306.}

\bibitem[Azevedo et al.(2013)]{AWW13}{\small Azevedo, Eduardo M., E. Glen Weyl, and Alexander White (2013), ``Walrasian equilibrium in large, quasi-linear markets." \emph{Theoretical Economics}, 8(2), 281-290. }

\bibitem[Azevedo and Hatfield(2018)]{AH18}{\small Azevedo, Eduardo M. and John W. Hatfield (2018), ``Existence of equilibrium in large matching markets with complementarities.'' \emph{Working paper}.}

\bibitem[Ba\"{\i}ou and Balinski(2000)]{BB00}{\small Ba\"{\i}ou, Mourad and Michel Balinski (2000), ``The stable admissions polytope." \emph{Mathematcial Programming}, 87, 427-439.}

\bibitem[Baldwin and Klemperer(2019)]{BK19}{\small Baldwin, Elizabeth and Paul Klemperer (2019), ``Understanding Preferences: ``Demand Types'', and the Existence of Equilibrium with Indivisibilities.'' \emph{Econometrica}, 87, 867-932. }

\bibitem[Bir\'{o} et al.(2014)]{BMM14}{\small Bir\'{o}, P\'{e}ter, David F. Manlove, and Iain McBride (2014), ``The hospitals/residents problem with couples: complexity and integer programming models.'' In: Proceedings of SEA 2014: the 13th international symposium on experimental algorithms. LNCS, vol 8504. Springer, New York.}

\bibitem[Blair(1984)]{B84}{\small Blair, Charles (1984), ``Every finite distributive lattice is a set of stable matchings.'' \textit{Journal of Combinatorial Theory}, 37, 353-356.}

\bibitem[Bogomolnaia and Moulin(2001)]{BM01}{\small Bogomolnaia, Anna and Herv\'{e} Moulin (2001), ``A new solution to the random assignment problem.'' \textit{Journal of Economic Theory}, 100, 295-328. }

\bibitem[Che, Kim, and Kojima(2019)]{CKK19}{\small Che, Yeon-Koo, Jinwoo Kim, and Fuhito Kojima (2019), ``Stable matching in large economies.'' \textit{Econometrica}, 87(1), 65-110.}

\bibitem[Danilov and Koshevoy(2004)]{DK04}{\small Danilov, Vladimir and Gleb Koshevoy (2004), ``Discrete convexity and unimodularity-I.'' \emph{Advances in Mathematics}, 189(2), 301-324.}

\bibitem[Danilov et al.(2001)]{DKM01}{\small Danilov, Vladimir, Gleb Koshevoy, and Kazuo Murota (2001), ``Discrete convexity and equilibria in economics with indivisible goods and money.'' \emph{Mathematical Social Sciences}, 41, 251-273.}

\bibitem[Echenique(2012)]{E12}{\small Echenique, Federico (2012), ``Contracts vs. salaries in matching.'' \textit{American Economic Review}, 102, 594-601.}

\bibitem[Echenique and Oviedo(2004)]{EO04}{\small Echenique, Federico and Jorge Oviedo (2004), ``Core many-to-one matchings by fixed point methods.'' \textit{Journal of Economic Theory}, 115, 358-376.}

\bibitem[Echenique and Oviedo(2006)]{EO06}{\small Echenique, Federico and Jorge Oviedo (2006), ``A theory of stability in many-to-many matching.'' \textit{Theoretical Economics}, 1, 233-273.}

\bibitem[Echenique and Yenmez(2007)]{EY07}{\small Echenique, Federico and M. Bumin Yenmez (2007), ``A solution to matching with preferences over colleagues.'' \textit{Games and Economic Behavior}, 59, 46-71.}

\bibitem[Echenique and Yenmez(2015)]{EY15}{\small Echenique, Federico and M. Bumin Yenmez (2015), ``How to control controlled school choice.'' \textit{American Economic Review}, 105(8), 2679-2694.}

\bibitem[Fleiner(2003)]{F03}{\small Fleiner, Tamas (2003), ``A fixed-point approach to stable matchings and some applications.'' \textit{Mathematics of Operations Research}, 28, 103-126.}

\bibitem[Gale and Shapley(1962)]{GS62}{\small Gale, David and Lloyd S. Shapley (1962), ``College admissions and the stability of marriage.'' \textit{American Mathematical Monthly}, 69, 9-15.}

\bibitem[Hatfield and Milgrom(2005)]{HM05}{\small Hatfield, John W. and Paul Milgrom (2005), ``Matching with contracts.'' \textit{American Economic Review}, 95, 913-935.}

\bibitem[Hatfield and Kojima(2010)]{HK10}{\small Hatfield, John W. and Fuhito Kojima (2010), ``Substitutes and stability for matching with contracts.'' \textit{Journal of Economic Theory}, 145(5), 1704-1723.}

\bibitem[Hatfield and Kominers(2019)]{HK19}{\small Hatfield, John W. and Scott D. Kominers (2019), ``Hidden substitutes.'' \emph{Working paper}.}

\bibitem[Hatfield et al.(2013)]{HKNOW13}{\small Hatfield, John W., Scott D. Kominers, Alexandru Nichifor, Michael Ostrovsky, and Alexander Westkamp (2013), ``Stability and competitive equilibrium in trading networks.'' \textit{Journal of Political Economy}, 121(5), 966-1005.}

\bibitem[Hoffman and Kruskal(1956)]{HK56}{\small Hoffman, Alan J. and Joseph B. Kruskal (1956), ``Integral boundary points of convex polyhedra.'' in: Linear Inequalities and Related Systems (H. W. Kuhn and A. W. Tucker, eds.), Princeton Univ. Press, Princeton, N.J., 1956, 223-246.}

\bibitem[Huang(2021a)]{H21a}{\small Huang, Chao (2021a), ``Unidirectional substitutes and complements.'' \emph{Working paper}, arXiv:2108.12572.}

\bibitem[Huang(2021b)]{H21b}{\small Huang, Chao (2021b), ``Matching with specialists.'' \emph{Working paper}.}

\bibitem[Huang(2021c)]{H21c}{\small Huang, Chao (2021c), ``Stable matching: an integer programming approach.'' \emph{Working paper}, arXiv:2103.03418v1.}

\bibitem[Kamada and Kojima(2015)]{KK15}{\small Kamada, Yuichiro and Fuhito Kojima (2015), ``Efficient matching under distributional constraints: Theory and applications.'' \textit{American Economic Review}, 105(1), 67-99. }

\bibitem[Kelso and Crawford(1982)]{KC82}{\small Kelso, Alexander S. and Vincent P. Crawford (1982), ``Job matching, coalition formation and gross substitutes.'' \textit{Econometrica}, 50, 1483-1504. }

\bibitem[Klaus and Klijn(2005)]{KK05}{\small Klaus, Bettina and Flip Klijn (2005), ``Stable matchings and preferences of couples.'' \textit{Journal of Economic Theory}, 121, 75-106.}

\bibitem[Kojima et al.(2013)]{KPR13}{\small Kojima, Fuhito, Parag A. Pathak, and Alvin E. Roth (2013), ``Matching with Couples: Stability and Incentives in Large Markets.'' \textit{Quarterly Journal of Economics}, 128, 1585-1632.}

\bibitem[Nguyen and Vohra(2018)]{NV18}{\small Nguyen, Thanh and Rakesh Vohra (2018), ``Near-feasible stable matchings with couples.'' \textit{American Economic Review}, 108(11), 3154-3169.}

\bibitem[Ostrovsky(2008)]{O08}{\small Ostrovsky, Michael (2008), ``Stability in supply chain networks.'' \textit{American Economic Review}, 98, 897-923. }

\bibitem[Pycia(2012)]{P12}{\small Pycia, Marek (2012), ``Stability and preference alignment in matching and coalition formation.'' \textit{Econometrica}, 80(1), 323-362. }

\bibitem[Roth(1984)]{R84}{\small Roth, Alvin E. (1984), ``Stability and Polarization of Interests in Job Matching.'' \emph{Econometrica}, 52, 47-57.}

\bibitem[Roth(1985)]{R85}{\small Roth, Alvin E. (1985), ``Conflict and coincidence of interest in job matching: Some new results and open questions.'' \emph{Mathematics of Operations Research}, 10, 379-389.}

\bibitem[Roth and Peranson (1999)]{RP99}{\small Roth, Alvin E. and Elliott Peranson (1999), ``The Redesign of the Matching Market for American Physicians: Some Engineering Aspects of Economic Design.'' \emph{American Economic Review}, 89, 748-780.}

\bibitem[Roth(2002)]{R02}{\small Roth, Alvin E. (2002), ``The Economist as Engineer: Game Theory, Experimentation, and Computation as Tools for Design Economics.'' \emph{Econometrica}, 70, 1341-1378.}

\bibitem[Roth et al.(1993)]{RRV93}{\small Roth, Alvin E., Uriel G. Rothblum, and John H. Vande Vate (1993), ``Stable matching, optimal assignments and linear programming.'' \textit{Mathematics of Operations Research}, 18, 808-828.}

\bibitem[Roth and Sotomayor(1990)]{RS90}{\small Roth, Alvin E. and Marilda Sotomayor (1990), ``Two-sided Matching: A Study in Game-Theoretic Modelling and Analysis.'' Econometric Society Monographs No. 18, Cambridge University Press, Cambridge England.}

\bibitem[Rothblum(1992)]{R92}{\small Rothblum, Uriel G. (1992), ``Characterization of stable matchings as extreme points of a polytope.'' \textit{Mathematical Programming}, 54, 57-67.}

\bibitem[Schrijver(1986)]{S86}{\small Schrijver, Alexander. (1986), ``Theory of linear and integer programming.'' Wiley-Interscience Series in Discrete Mathematics, vol. 13, John Wiley \& Sons, Chichester, UK.}

\bibitem[Sethuraman et al.(2006)]{STQ06}{\small Sethuraman, Jay, Chung-Piaw Teo, and Liwen Qian (2006), ``Many-to-one stable matching: Geometry and fairness.'' \textit{Mathematics of Operations Research}, 31, 581-596.}

\bibitem[Sun and Yang(2006)]{SY06}{\small Sun, Ning and Zaifu Yang (2006), ``Equilibria and indivisibilities: Gross substitutes and complements.'' \textit{Econometrica}, 74, 1385-1402.}

\bibitem[Sun and Yang(2009)]{SY09}{\small Sun, Ning and Zaifu Yang (2009), ``A double-track adjustment process for discrete markets with substitutes and complements.'' \textit{Econometrica}, 77, 993-952.}

\bibitem[Teo and Sethuraman(1998)]{TS98}{\small Teo, Chung-Piaw and Jay Sethuraman (1998), ``The geometry of fractional stable matchings and its applications.'' \textit{Mathematics of Operations Research}, 23(4), 874-891.}

\bibitem[Tran and Yu(2019)]{TY19}{\small Tran, Ngoc M. and Josephine Yu (2019), ``Product-mix auctions and tropical geometry.'' \emph{Mathematics of Operations Research}, 44(4), 1145-1509.}

\bibitem[Tutte(1965)]{T65}{\small Tutte, William T. (1965), ``Lectures on matroids.'' \emph{Journal of Research of the National Bureau of Standards (B)}, 69, 1-47.[reprinted in: Selected Papers of W. T Tutte, Vol. II (D.McCarthy and R. G. Stanton, eds.), Charles Babbage Research Centre, St. Pierre,Manitoba, 1979, pp. 439-496]}

\bibitem[Vande Vate(1989)]{V89}{\small Vande Vate, John H. (1989), ``Linear programming brings marital bliss.'' \textit{Operations Research Letters}, 8, 147-153.}

\bibitem[Walter and Truemper(2013)]{WT13}{\small Walter, Matthias and Klaus Truemper (2013), ``Implementation of a unimodularity test.'' \textit{Mathematical Programming Computation}, 5, 57-73.}
\end{thebibliography}
\end{document}